\newtheorem{theorem}{Theorem}
\newtheorem{lemma}[theorem]{Lemma}
\theoremstyle{definition}
\newtheorem{definition}{Definition}
\theoremstyle{remark}
\newtheorem{remark}{Remark}
\newtheorem{example}{Example}
\newcommand{\nn}{\nonumber}
\newcommand{\integers}{\mathbb{Z}}
\newcommand{\const}[1]{{\mathcal{#1}}}
\newcommand{\Reals}{\mathbb{R}}
\newcommand{\Complex}{\mathbb{C}}
\newcommand{\E}{\mathsf{E}}
\newcommand{\bigo}[1]{\mathcal{O}\left(#1\right)}
\newcommand{\ie}{\emph{i.e.}}
\newcommand{\eg}{\emph{e.g.}}
\newcommand{\der}{\mathrm{d}}
\newcommand{\cc}{\mathrm{c.c.}}
\newcommand{\iid}{i.i.d.}
\newcommand{\eqdef}{\stackrel{\Delta}{=}}
\newcommand{\nr}{\textnormal{nr}}
\providecommand{\norm}[1]{\left\lVert#1\right\rVert}
\begin{document}

\title{The Kolmogorov-Zakharov Model\\for Optical Fiber Communication%
\thanks{
Published in the IEEE Transactions on Information Theory, vol. 63, no. 1, Jan. 2017, available
online at http://ieeexplore.ieee.org/document/7676314/. 
This work was supported in part by the Institute for Advanced
Study, Technische Universit\"at M\"unchen, funded by the German Excellence
Initiative and in part by the Alexander von Humboldt Foundation, funded by
the German Federal Ministry of Education and Research. 
\newline
\indent The author was with the Technische Universit\"at M\"unchen, 80333 Munich,
Germany. He is now with the Communications and Electronics
Department, T\'el\'ecom ParisTech, 75013 Paris, France (email: yousefi@telecom-paristech.fr).}}

\markboth{IEEE Transactions on Information Theory,  VOL. 63, NO. 1, JANUARY 2017}{M. I. Yousefi}

\author{Mansoor~I.~Yousefi}

\IEEEpubid{
\begin{minipage}{\textwidth}\ \\[12pt] \centering
0018--9448~\copyright~2016 IEEE. Personal use is permitted, but republication/redistribution requires IEEE permission. 
\\
See http://www.ieee.org/publications\_standards/publications/rights/index.html for more information.
\end{minipage}}

\date{}

\maketitle

\begin{abstract}
\ifCLASSOPTIONonecolumn\relax\else\boldmath\fi
A mathematical framework is presented to study the evolution of
multi-point cumulants in nonlinear dispersive partial differential
equations with random input data, based on the theory of weak wave
turbulence (WWT).   This framework is used to explain how energy is
distributed among Fourier modes in the nonlinear Schr\"odinger equation.
This is achieved by considering interactions among four Fourier modes
and studying the role of the resonant, non-resonant, and trivial
quartets in the dynamics.  As an application, a power spectral density
is suggested for calculating the interference power in dense
wavelength-division multiplexed optical systems, based on the
kinetic equation of the WWT. This power spectrum, termed the
Kolmogorov-Zakharov (KZ) model, results in a better estimate of the
signal spectrum in optical fiber, compared with the so-called Gaussian
noise (GN) model. The KZ model is generalized to non-stationary inputs
and multi-span optical systems. 
\end{abstract}

\begin{IEEEkeywords}
Fiber-optic communication,
weak wave turbulence, moments, cumulants, perturbation theory,
power spectral density.
\end{IEEEkeywords}

\maketitle


\section{Introduction}

\IEEEPARstart{T}{his} paper studies the power spectral density (PSD) and the
probability distribution of a signal propagating according to the
one-dimensional cubic nonlinear Schr\"odinger (NLS) equation, which serves as a
model for fiber-optic communication channels.  A PSD known as the Gaussian
noise (GN) model has been proposed for optical fiber communications
\cite{inoue1992phase,poggiolini2012gn}, resulting from a first-order
perturbation approach to four-wave mixing in the NLS equation
\cite{inoue1992phase}.  Although the GN model has appeared in the fiber-optic
communications literature, there also exists a satisfactory and well-developed
theory of the PSD for nonlinear dispersive equations in the context of wave
turbulence in mathematical physics, and this theory forms the foundation for
this paper.

From a mathematical point of view, the main idea of this approach to PSD
can be broadly abstracted as follows.  Suppose that a probability
measure at the input $z = 0$ of a partial differential equation (PDE)
describing signal evolution in distance $z$ is given. We are interested
in finding the probability measure at some $z>0$.  First, all $n$-point
moments of this probability distribution at $z = 0$ are found. For
reasons that are explained in Section~\ref{sec:energy-mechanisms}, it is
convenient to work with cumulants in place of moments, which are in
one-to-one relation with one another.  Then, a hierarchy of differential
equations is obtained that governs the evolution of cumulants in
distance. Under certain assumptions, this hierarchy is truncated at some
order $n$, ignoring the influence of higher order cumulants. This turns
the original infinite-dimensional functional problem into a
finite-dimensional differential system for a set of scalar parameters,
which is subsequently solved to obtain cumulants at distance $z > 0$.
Finally,  cumulants at $z$ are combined to obtain the probability
measure at $z$. The aim of wave turbulence theory is to study energy
distribution in the frequency domain via the PSD, a 2-point cumulant. As
a result, the emphasis is heavily placed on correlation function;
nevertheless the idea is useful to obtain information on higher-order
statistics.

In strong turbulence, encountered \eg, in the Navier-Stokes equations of
hydrodynamics,  nonlinearity can be strong and the cumulant hierarchy may
not truncate.  However in weak wave turbulence (WWT), under a weak
nonlinearity assumption, which holds in optical fiber, the hierarchy is
truncated and a differential equation for the PSD, known as a
\emph{kinetic equation}, is obtained.
\cite{zakharov1992turb,zakharov2009tis,zakharov2004owt}.  The kinetic
equations of WWT can often be solved using, \eg, Zakharov conformal
transformations to obtain Kolmogorov-Zakharov (KZ) stationary spectra.
In this paper, we study 2-, 4- and 6-point cumulants, with the aim of
obtaining a more accurate spectrum for interference and probability
distribution than what is currently known in communications. 

\IEEEpubidadjcol

Turbulence theory helps us to understand the mechanisms by which the
signal of one user is transported to the other users in a multiuser
communication system. For instance, kinetic equations provide accurate
predictions of interference power. More importantly, the theory gives
useful insights into inter-channel interactions in wavelength-division
multiplexing (WDM). WWT also yields a suitable mathematical framework
for statistical signal analysis in optical fiber, which we use for
modeling.

The contributions of the paper are organized as follows.

In
Section~\ref{sec:channel-model} we describe the channel model. We focus
on the frequency domain and introduce both a continuous and a discrete
model, as it turns out there are differences among them.

In Section~\ref{sec:gn-model} we  derive the GN model in a simplified
manner. We point out some of the shortcomings of the perturbation
approach when applied to equations of type NLS.

To help understand interference, in Section~\ref{sec:energy-transfer} we describe energy transfer mechanisms in optical
fiber via resonant manifolds and classification of quartets. We explain
how energy transport differs between integrable and non-integrable channels.

We introduce the basic KZ model in Section~\ref{sec:kz-model} and explain
how it relates to the GN model. The standard kinetic equation of the WWT
predicts a stationary spectrum for the integrable NLS equation. However, the
analysis
can be carried out to the next order in the nonlinearity
level to account for deviations from the stationary spectrum. 
We compare the KZ and GN PSDs
and show that the KZ PSD is equally simple yet provides better
estimates of WDM interference. The KZ model describes energy fluxes
correctly; for instance, unlike the GN model, the KZ spectrum of the lossless optical fiber is
energy-preserving. The KZ model also
predicts a quasi-Gaussian distribution, which is close to a Gaussian
one. However, this small deviation from the Gaussian distribution, not
captured by the GN model, is responsible
for spectrum evolution and interference. 

Sections~\ref{sec:wdm-application} and \ref{sec:multi-span} are dedicated to further details
about the KZ and GN models. The assumptions of the KZ model are
examined in the context of communications. The KZ model
is generalized to WDM and multi-span systems. 

The power spectral density is a central object in statistical studies of
nonlinear dispersive waves and is widely studied in mathematical
physics\footnote{In physics, the PSD can often be recognized in relation
with terms wave number, wave-action density, particle number, occupation
number, pair correlator, correlation function, energy density, etc.}.
One aim of this paper is to point this out and show that some of the
elaborate PSD calculations in communications engineering can be
succinctly modified and methodically generalized in the more fundamental
framework of WWT. The reader is referred to
\cite{zakharov1992turb,zakharov2004owt} for an introduction to WWT and
to \cite{turitsyn2013owt, picozzi2014owt} for a survey of optical
turbulence---mostly in higher-dimensional non-integrable models or in
laser applications \cite{turitsyn2013owt}. One-dimensional turbulence in
integrable systems, which is the focus of this paper, is discussed in
\cite{zakharov2009tis, suret2011wave}.


\section{Notation and Preliminaries}
\label{sec:notation}

The Fourier transform of $q(t)$ is represented as 
\begin{IEEEeqnarray}{rCl}
  \mathcal{F}(q)(\omega)\eqdef\int\limits_{-\infty}^{\infty}q(t)e^{j\omega t}\der t.
\label{eq:ft}
\end{IEEEeqnarray}
We use subscripts to denote the frequency variable, \eg,
$q_{\omega}=\mathcal{F}(q)(\omega)$. Fourier series coefficients of a
periodic signal $q(t)$ are similarly denoted by
$q_k=\mathcal{F}_s(q)(k)$. When there are multiple frequencies
$\omega_i$ in an expression, for brevity we often use shorthand
notations $q_i\eqdef q(\omega_i)$ and $\der\omega_{1\cdots
  n}\eqdef\prod_{i=1}^n\der\omega_i$. In such cases, it will be clear from
the context whether $q_i$ corresponds to
a discrete or a continuous frequency variable. To avoid confusion, we do not use
subscripts to denote a time variable. 

The following notation is used throughout the paper 
\begin{IEEEeqnarray*}{rCl}
\delta_{k_1\cdots k_{2n}}\eqdef\delta(s_1k_1+\cdots+s_{2n}k_{2n}),\quad k_i\in\integers,
\end{IEEEeqnarray*}
where $\delta(m)$ is the Kronecker delta and 
\begin{IEEEeqnarray*}{rCl}
s_i\eqdef 
\begin{cases}
1, & 1\leq i\leq n,\\
-1, & n+1\leq i\leq 2n.
\end{cases}
\end{IEEEeqnarray*}
For continuous
frequencies, the corresponding real subscript $\omega_1\cdots \omega_{2n}$ is shortened to
the integer 
subscript $1\cdots 2n$ 
\begin{IEEEeqnarray}{rCl}
\delta_{1\cdots2n}\eqdef\delta(s_1\omega_1+\cdots+s_{2n}\omega_{2n}),\quad\omega_i\in\Reals,  
\IEEEeqnarraynumspace
\end{IEEEeqnarray}
where, with notation abuse, $\delta(\omega)$ is the Dirac delta
function.

Let $q(t)$ be a zero-mean stochastic process. 
The symmetric $2n$-point \emph{correlation functions} in time (temporal moments) are
\begin{IEEEeqnarray}{rCl}
  R(t_1\cdots t_{2n})\eqdef
\E \Bigl[q(t_{1})\cdots q(t_{n})q^*(t_{n+1})\cdots
q^*(t_{2n})\Bigr],
\label{eq:n-point-correlation}
\end{IEEEeqnarray}
where $\E$ denotes expectation with respect to the
corresponding joint probability distribution. Similarly, the
$2n$-point \emph{spectral moments} are
\begin{IEEEeqnarray}{rCl}
\mu_{1\cdots 2n}\eqdef\E\Bigl[ q_1\cdots q_{n}q^*_{n+1}\cdots
q^*_{2n}\Bigr].
\label{eq:n-point-moment}
\end{IEEEeqnarray}
The asymmetric correlation functions and moments, in which the number of
conjugate and non-conjugate variables is not the same, is assumed to
be zero. If $n=1$, $\mu_{12}$ corresponds to the correlation between
$q(\omega_1)$ and $q(\omega_2)$.

If a stochastic process $q(t)$ is (strongly) stationary, then 
\begin{IEEEeqnarray}{rCl}
  R(t_1,\cdots, t_{2n})=R(t_1-t_0,\cdots, t_{2n}-t_0),
\label{eq:wss}  
\end{IEEEeqnarray}
for any reference point $t_0$ and $n\geq 1$. It is shown in 
Appendix~\ref{app:cumulants-wss} that if $q(t)$ is stationary, then 
\begin{IEEEeqnarray}{rCl}
\mu_{1\cdots 2n} = S_{1\cdots 2n}\delta_{1\cdots 2n},
\label{eq:n-point-psd}
\end{IEEEeqnarray}
where
\begin{IEEEeqnarray*}{rCl}
  S_{1\cdots
    2n}&=&\mathcal{F}(R(0,t_2,\cdots,t_{2n}))(0,s_2\omega_2, \cdots,s_{2n}\omega_{2n}),
\end{IEEEeqnarray*}
is the \emph{moment density function}.
Thus $\mu_{1\cdots 2n}$ is non-zero only on the \emph{stationary manifold}
\begin{IEEEeqnarray}{rCl}
  s_1\omega_1+\cdots+s_{2n}\omega_{2n}=0.
\label{eq:stationary-manifold}
\end{IEEEeqnarray}
If $n=1$, $\mu_{12}=S_{11}\delta_{12}$. We shorten equal indices as $S_{k}\eqdef S_{kk}$.

In addition to correlation functions and spectral moments, we also
require spectral cumulants $\kappa_{1\cdots 2n}$ and their densities $\tilde S_{1\cdots 2n}$. The reader is referred to
Appendix~\ref{app:cumulants-moments} for the definition of cumulants and their
relation with moments. It is shown that the $2n$-point
moments decompose in terms of the $2k$-point cumulants, $k\leq n$. Particularly, if $n=2, 3$, 
from \eqref{eq:cumulant-to-moment-6}:
    \begin{IEEEeqnarray}{rCl}
      \mu_{1234} &=& S_{1234}\delta_{1234}\nn\\
&=&
S_1S_2
\Bigl(\delta_{13}\delta_{24}+\delta_{14}\delta_{23}\Bigr)+\tilde{S}_{1234}\delta_{1234},
\label{eq:4-point}
  \\
\mu_{123456}&=&S_{123456}\delta_{123456}
\nn 
\\
&=&S_1S_2S_3\Bigl(\delta_{14}\delta_{25}\delta_{36}
+
\delta_{14}\delta_{26}\delta_{35}
\nn
\\
&&
+
\delta_{15}\delta_{24}\delta_{36}
+
\delta_{15}\delta_{26}\delta_{34}
\nn
\\
&&
+
\delta_{16}\delta_{24}\delta_{35}
+
\delta_{16}\delta_{25}\delta_{34}
\Bigr)+\tilde{S}_{123456}\delta_{123456},
\IEEEeqnarraynumspace
\label{eq:6-point}
\end{IEEEeqnarray}
where $\tilde S_{1234}$ and $\tilde S_{123456}$ are cumulant densities.

For a Gaussian distribution, only the mean and the 2-point
cumulants are non-zero. Consequently, $\mu_{1\cdots 2n}$ is concentrated on \emph{normal manifolds}
\begin{IEEEeqnarray*}{rCl}
  s_l\omega_{l}+s_k\omega_{k}=0, \quad 1\leq l\leq n,\quad n+1\leq k\leq 2n,
\end{IEEEeqnarray*}
which are subsets of the stationary manifold. Other distributions
generally
have infinitely many
non-zero cumulants. As a result, cumulants are used in this paper to measure
deviations from the Gaussian distribution. A zero-mean distribution is defined to be
\emph{quasi-Gaussian} if 
\begin{IEEEeqnarray}{rCl}
  \tilde S_{1\cdots n} \approx 0, \quad \forall n\geq 6.
\label{eq:quasi-gaussian}
\end{IEEEeqnarray}  
That is to say, at most 4-point cumulants are significant. 

We will often make use of the trilinear integral and sum of the signals
$q_{\omega}(z)$ and $q_k(z)$, defined as
\begin{IEEEeqnarray}{rCl}
  \mathcal{N}_{\omega}(q,q,q)(z)\eqdef
  \int\limits_{-\infty}^\infty
  q_{1}(z)q_{2}(z)q^*_{3}(z)
\delta_{123\omega}\der \omega_{123},
\IEEEeqnarraynumspace
\label{eq:N-omega}
\end{IEEEeqnarray}
and
\begin{IEEEeqnarray}{rCl}
\mathcal{N}_k(q,q,q)(z)\eqdef \sum\limits_{lmn\:\in\: \nr_k}
q_\ell(z)q_m(z)q_n^*(z),
\label{eq:N-k}
\end{IEEEeqnarray}
where $\nr_k$ is the set of the non-resonant frequencies
\begin{IEEEeqnarray*}{rCl}
  {\nr}_k\eqdef\Bigl\{ (l,m,n)\:\bigl|\: l+m=n+k,\: l\neq k, \: m\neq k\Bigr\}.
\end{IEEEeqnarray*}
These expressions help to factor out part of the
complexity. 

The following simple lemma is frequently used in Section~\ref{sec:gn-model} when passing from the zero-order to the
first-order in perturbation expansions.
\begin{lemma}
Let $q_\omega(z)\eqdef\exp\left(j\omega^2 z\right)q_\omega(0)$. Then
\begin{IEEEeqnarray*}{rCl}
  \int\limits_{0}^z
  e^{-jz'\omega^2}\mathcal{N}_{\omega}(q,q,q)(z')\der
  z'&=&\nn
\\ && \hspace{-3cm}j\int H_{123\omega}(z) q_1(0)q_2(0)q_3^*(0)\delta_{123\omega}\der\omega_{123\omega},  
\end{IEEEeqnarray*}
where the $H$-function is
\begin{IEEEeqnarray}{rCl}
H_{123\omega}(z)\eqdef
\begin{cases}
(1-e^{j\Omega_{123\omega}
  z})/\Omega_{123\omega}, & \Omega_{123\omega}\neq 0,\\
-jz, & \Omega_{123\omega}=0,
\end{cases}
\IEEEeqnarraynumspace   
\label{eq:H}
\end{IEEEeqnarray}
in which
\begin{IEEEeqnarray*}{rCl}
  \Omega_{123\omega}\eqdef\omega_1^2+\omega_2^2-\omega_3^2-\omega^2.
\end{IEEEeqnarray*}
\label{lemm:zero-to-one}
\end{lemma}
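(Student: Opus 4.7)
The plan is to reduce the claim to an elementary integration in $z'$ by substituting the free-evolution formula $q_\omega(z') = e^{j\omega^2 z'} q_\omega(0)$ into the trilinear integral $\mathcal{N}_\omega(q,q,q)(z')$ and then interchanging the order of the $z'$-integration and the frequency integration.

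First I would plug $q_i(z') = e^{j\omega_i^2 z'}q_i(0)$ for $i=1,2$ and $q_3^*(z') = e^{-j\omega_3^2 z'}q_3^*(0)$ into the definition \eqref{eq:N-omega} of $\mathcal{N}_\omega$. Multiplying by the prefactor $e^{-jz'\omega^2}$, the product of exponentials collapses to $\exp(jz'\Omega_{123\omega})$ with $\Omega_{123\omega}$ as defined in the statement, since the quadratic phases combine as $\omega_1^2+\omega_2^2-\omega_3^2-\omega^2$. Assuming the frequency integrand is integrable (or interpreting it in the usual tempered-distribution sense needed to write such spectral expansions), Fubini justifies bringing the $\int_0^z\!\der z'$ inside the frequency integral, giving an inner factor $\int_0^z e^{jz'\Omega_{123\omega}}\der z'$ multiplying $q_1(0)q_2(0)q_3^*(0)\delta_{123\omega}$.

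Next I would evaluate this inner integral in two cases. When $\Omega_{123\omega}\neq 0$, the elementary antiderivative yields $(e^{j\Omega_{123\omega}z}-1)/(j\Omega_{123\omega})$, which I would rewrite as $j(1-e^{j\Omega_{123\omega}z})/\Omega_{123\omega} = jH_{123\omega}(z)$ by the first branch of \eqref{eq:H}. When $\Omega_{123\omega}=0$, the exponential is constant in $z'$ and the integral equals $z$, which matches $j\cdot(-jz)=jH_{123\omega}(z)$ by the second branch. Hence in either case the inner integral is exactly $jH_{123\omega}(z)$, and factoring out the common $j$ produces the right-hand side of the claim.

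There is no real obstacle; the only mildly delicate point is the interchange of the $z'$-integral with the $\omega_{123}$-integral, which I would justify either by assuming $q_\omega(0)$ is Schwartz so that the integrand is absolutely integrable on $[0,z]\times\Reals^3$, or by reading both sides as distributional identities consistent with the $\delta_{123\omega}$ constraint already present in \eqref{eq:N-omega}. The piecewise definition of $H$ is precisely the removable-singularity continuation of $(1-e^{j\Omega z})/\Omega$ at $\Omega=0$, so the two-case calculation above is really a single computation made explicit.
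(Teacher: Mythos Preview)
Your argument is correct and is exactly what the paper means by ``the result follows by substitution'': you have simply written out the substitution, exponent combination, and elementary $z'$-integration explicitly. There is no difference in approach.
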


\begin{proof}
The result  follows by substitution.
\end{proof}

Depending on the context, we may write the $H$-function as  $H_{123\omega}$,
$H_{123\omega}(z)$ or $H(\Omega_{123\omega})(z)$. A similar lemma can
be stated for the trilinear sum \eqref{eq:N-k}. 


\section{Channel Model}
\label{sec:channel-model}

\subsection{Continuous-frequency NLS Equation}

We consider the one-dimensional cubic dimensionless NLS equation on the real line
\begin{IEEEeqnarray}{rCl}
  j \partial_z q=\partial_{tt}q+2|q|^2q,\quad (t,z)\in \Reals\times\Reals^+,
\label{eq:cnls}
\end{IEEEeqnarray}
where $q(t,z)$ is the signal as a function of space $z$ and time
$t$. To focus on main ideas, in this section we consider only a
single span lossless fiber.  Loss and amplification in
multi-span systems are introduced later in Section~\ref{sec:multi-span}. 

Using Duhamel's formula, the differential equation \eqref{eq:cnls} can
be re-written as an integral equation 
\begin{IEEEeqnarray}{rCl}
  q_{\omega}(z)&=&e^{jz\omega^2}q_{\omega}(0)\nn\\
&&-\:2j\int\limits_{0}^z
e^{j(z-z')\omega^2}\mathcal{N}_{\omega}(q,q,q)(z')\der z',
\label{eq:cnls-integ}
\end{IEEEeqnarray}
where $\mathcal{N}_\omega$, defined in \eqref{eq:N-omega}, represents
interaction among all four waves $123\omega$, and $\delta_{123\omega}$
denotes the corresponding frequency
matching condition 
\begin{IEEEeqnarray}{rCl}
 \omega_1+\omega_2=\omega_3+\omega.
\label{eq:hyperplane}
\end{IEEEeqnarray}
An interacting quartet can be shown schematically as $12\rightarrow 3\omega$. 

\begin{definition}[Trivial interactions]
\label{rem:trivial-ints}
A subset of frequencies in \eqref{eq:hyperplane} are trivial interactions
\begin{IEEEeqnarray}{rCl}
  \left(\omega_1=\omega_3, \:\omega_2=\omega\right),\quad 
 \left(\omega_1=\omega,\: \omega_2=\omega_3\right).
\label{eq:trivial-int}
\end{IEEEeqnarray}
These frequencies form a set of zero Lebesgue measure on the hyperplane \eqref{eq:hyperplane} and do not
contribute to the integral \eqref{eq:cnls-integ} --- unless the integrand has a delta function 
on \eqref{eq:trivial-int}; see \eqref{eq:mu-123'1'2'3}.
\qed
\end{definition}

\subsection{Discrete-frequency NLS Equation}

We also consider the NLS equation on torus $t\in \mathbb{T}=\Reals/(T\mathbb{Z})$, corresponding to $T$-periodic
signals. Partitioning the sums
\begin{IEEEeqnarray}{rCl}
  \sum\limits_{lm}&=&\sum\limits_{(l=k)\vee (m=k)}+\sum\limits_{(l\neq
    k) \wedge (m\neq k)},\nn\\
  \sum\limits_{(l=k)\vee (m=k)}&=&\sum\limits_{l=k}+\sum\limits_{m=k}-\sum\limits_{(l=k)\wedge (m=k)},
\label{eq:sum-or-and}
\end{IEEEeqnarray}
where $\vee$ and $\wedge$ are, respectively, \emph{or} and \emph{and} operations,
we get the identity
\begin{IEEEeqnarray}{rCl}
  \mathcal{F}_s(|q|^2q)(k)=2\const{P}q_k-|q_k|^2q_k+
\mathcal{N}_k(q,q,q)(z),
\label{eq:identity}
\end{IEEEeqnarray}
where $\const{P}\eqdef\norm{q(t)}_2^2/T$. The NLS equation in the discrete frequency domain is
\begin{IEEEeqnarray}{rCl}
   \partial_z
   q_k=j\omega_0^2k^2q_k-\underbrace{4j\const{P}q_k}_{\text{XPM}}+\underbrace{2j|q_k|^2q_k}_{\text{SPM}}
-2j\underbrace{\mathcal{N}_k(q,q,q)(z)}_{\text{FWM}},
\IEEEeqnarraynumspace
\label{eq:dnls}
\end{IEEEeqnarray}
where SPM, XPM and FWM denote self-phase modulation, cross-phase
modulation and four-wave mixing. 
Note that the SPM and XPM 
indices ($l=k$  or $m=k$) have been removed from $\mathcal N_k$. Unlike their
continuous version \eqref{eq:trivial-int},  these indices form a set
with non-zero measure and have no analogue in
\eqref{eq:cnls-integ}. Note further that the XPM is a constant phase
shift, thanks to conservation of energy. 

\begin{figure}
\centering
\begin{tabular}{c@{\hskip 0.5cm}c}
\includegraphics{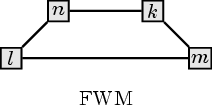} &
\includegraphics{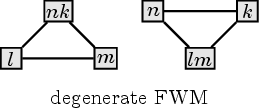} \\
(a)& (b) \\[0.3cm]
\includegraphics{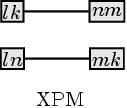} &
\includegraphics{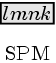} \\
(c) & (d)
\end{tabular}
\caption{Interacting quartets. Here $(l,m)$ and conjugate variables $(n,k)$ are
  shown on two copies of the $x$-axis, placed on top of each
  other. A combined index like $nk$ means $n=k$. (a) Regular FWM
  $x_lx_mx_n^*$ ($l\neq m\neq n\neq k$), (b)
  two degenerate FWMs $x_lx_mx_k^*$ ($l\neq m\neq k$) and $x_l^2x_n^*$
  ($l\neq n\neq k$),
  (c) two XPMs $|x_l|^2x_k$ and $|x_m|^2x_k$ ($l, m\neq k$) (d) SPM
$|x_k|^2x_k$ ($l=m=n=k$).}
\label{fig:quartets}
\end{figure}

As in the continuous model, the integral form of \eqref{eq:dnls} is
\begin{IEEEeqnarray}{rCl}
 q_k(z)&=&e^{j(\omega_0^2k^2-4\const{P}) z}\Bigl\{q_k(0)-2j\int_0^z
  e^{-j(\omega_0^2k^2-4\const{P})z'}\nn \\
&&\:\times\Bigl(-|q_k(z')|^2q_k(z')+
 \mathcal{N}_k(q,q,q)(z')\Bigr)\der z'\Bigr\}.
\IEEEeqnarraynumspace
\label{eq:dnls-integ}
\end{IEEEeqnarray}

\begin{example}[Classification of quartets]
Consider the sum
\begin{IEEEeqnarray*}{rCl}
S&\eqdef& \bigl|\left(x_{-2}+x_{-1}+x_{0}+x_{1}+x_{2}\right)\bigr|^2\nn\\
&&\times \left(x_{-2}+x_{-1}+x_{0}+x_{1}+x_{2}\right).
\end{IEEEeqnarray*}
The interference terms at frequency $k=0$ are ($l+m=n+0$)
\begin{IEEEeqnarray}{rCl}
S_0&=&
|x_0|^2x_0+\underbrace{2x_0\left[|x_{-2}|^2+|x_{-1}|^2+|x_{1}|^2+|x_{2}|^2\right]}_{\text{degenerate quartet } (l=k) \vee (n=k)}\nn\\
&&+\Bigl\{\underbrace{2x_0^*\bigl(
x_{-2}x_2+x_{-1}x_1\bigr)}_{\text{degenerate FWM }n=k}
+\underbrace{\bigl(x_{-1}^2x_{-2}^*+x_{1}^2x_{2}^*\bigr)}_{\text{degenerate FWM }l=m}\nn\\
&&+\underbrace{2\bigl(
x_{-2}x_{-1}^*x_{1}+
+x_{-1}x_{1}^*x_{2} 
\bigr)}_{\text{non-degenerate FWM }l\neq m\neq n\neq k}\Bigr\}.
\label{eq:example}
\end{IEEEeqnarray}
There are several possibilities for a quartet $lm\rightarrow nk$. If all indices are different, 
we get non-degenerate FWM. If two indices
of the same conjugacy type are equal, \ie, if $l=m$ or $n=k$, we obtain degenerate FWM. 
The cases that two indices of the opposite conjugacy type
are the same, \ie, $l=k$ or $n=k$, are also degenerate quartets. These are the terms with the square brackets in \eqref{eq:example}. 
The literature refers to these terms as 
XPM, not degenerate FWM. The degenerate quartet with multiplicity two where $l=m=n=k$ is known as the SPM
in the literature. However, according to our definition in \eqref{eq:dnls}, the SPM in \eqref{eq:example} is $-|x_0|^2x_0$ and the 
XPM is the term with the square brackets plus $2|x_0|^2x_0$. This simplifies XPM to
$2\const{P}x_0$ and negates the sign of the SPM, as in
\eqref{eq:identity}--\eqref{eq:dnls}.     
There are one SPM, ten XPM and ten FWM terms in this example. In general
if $-N \leq k \leq N$, 
a simple counting shows that there are $3N^2+3N+2$ (XPM and FWM) interference terms
at $k=0$. This number decreases as $k$ approaches the boundaries
$\pm N$. 
The XPM, degenerate and non-degenerate FWM
constitute, respectively,  the 1-, 2- and 3-wave interference.  

\qed
\label{ex:interference}
\end{example}


\section{GN Model}
\label{sec:gn-model}
The GN ``model'' in the literature refers to a PSD. In this
section, we re-derive this PSD for the continuous and
discrete models in a simplified manner. This clarifies GN PSD, so
that in Section \ref{sec:kz-vs-gn} it can be compared with the KZ PSD.

\subsection{Continuous-frequency NLS Equation}

Note that \eqref{eq:cnls-integ} is a fixed-point equation, mapping
$q(z)$ to itself. Iterating the fixed-point map
$q^{(k)}_\omega(z)\rightarrow q^{(k+1)}_\omega(z)$ starting from $q^{(-1)}_\omega=0$, we obtain
\begin{IEEEeqnarray*}{rCl}
  q_{\omega}^{(0)}(z)=e^{jz\omega^2}q_{\omega}(0).
\end{IEEEeqnarray*}
This is just the solution of the linear part of the NLS equation.
Iterating one more time and using Lemma~\ref{lemm:zero-to-one}, the
signal to the first-order in nonlinearity level is
\begin{IEEEeqnarray}{rCl}
  q_{\omega}^{(1)}(z)&=&
e^{jz\omega^2}\Bigl\{q_{\omega}(0) \nn\\
&&\hspace{-1cm}+\:2 \int H(\Omega_{123\omega})(z)q_1(0)q_2(0)q_3^*(0)\delta_{123\omega}\der\omega_{123}\Bigr\},
\IEEEeqnarraynumspace
\label{eq:q-pert}
\end{IEEEeqnarray}
where $H(\Omega_{123\omega})(z)$ is defined in \eqref{eq:H}.

It follows that the NLS equation has the simple closed-form solution
\eqref{eq:q-pert} to the 
first-order in the perturbation expansion. As a consequence, derived quantities such as
the PSD can also be calculated. Computing $\mu_{12}$ from \eqref{eq:q-pert} and removing factor $\delta_{12}$, we get
\begin{IEEEeqnarray}{rCl}
S_{\omega}(z)&=&S_{\omega}(0)+4\int \Re(H_{123\omega}S_{123\omega}) \delta_{123\omega}\der\omega_{123} \nn\\
&&\hspace{-2em}+4\int H_{123\omega}H_{1'2'3'\omega}^* S_{123'1'2'3}
\delta_{123\omega}\delta_{1'2'3'\omega}\der\omega_{1231'2'3'}.
\IEEEeqnarraynumspace
\label{eq:s-pert}
\end{IEEEeqnarray}

Equation \eqref{eq:s-pert} expresses a 2-point PSD as a function
of the 4- and 6-point PSDs. We can close the
equation for the 2-point PSD if we assume that signal statistics are
Gaussian. With this assumption, the 4- and 6-point PSDs break down according to \eqref{eq:4-point}--\eqref{eq:6-point}, with
zero cumulants. From \eqref{eq:4-point}
\begin{IEEEeqnarray}{rCl}
S_{123\omega}\delta_{123\omega}=S_1S_2(\delta_{13}\delta_{2\omega}+\delta_{1\omega}\delta_{23}).
\label{eq:S-is-real}
\end{IEEEeqnarray}
The right hand side in \eqref{eq:S-is-real} is real and supported on trivial interactions 
\eqref{eq:trivial-int}, where $H_{123\omega}=-jz$. Thus 
$\Re(H_{123\omega}S_{123\omega})=0$ and the first integral in \eqref{eq:s-pert} vanishes. 

For the second integral in \eqref{eq:s-pert}, note that
\begin{IEEEeqnarray*}{rCl}
 S_{123'1'2'3}\delta_{1'2'3'\omega}\delta_{123\omega}&=&
 S_{123'1'2'3}\delta_{123'1'2'3}\delta_{123\omega}
\\
&=&
 \mu_{123'1'2'3}\delta_{123\omega}.
\end{IEEEeqnarray*}
From \eqref{eq:6-point}
\begin{IEEEeqnarray}{rCl}
 \mu_{123'1'2'3}=
S_1S_2S_3\left(\delta_{11'}\delta_{22'}+\delta_{12'}\delta_{21'}\right)\delta_{33'},
\label{eq:mu-123'1'2'3}
\end{IEEEeqnarray}
where the other four terms are ignored. They lead to secular terms; we will
include them in Section~\ref{sec:secular}, \eqref{eq:S-sec}. 

Integrating over primed variables, the resulting
first-order PSD is
\begin{IEEEeqnarray}{rCl}
   S_{\omega}^{\text{GN}}(z)=S_\omega^0+8
\int
 |H_{123\omega}|^2S_{1}^0S_{2}^0S_{3}^0\delta_{123\omega}
\der\omega_{123},
\IEEEeqnarraynumspace
\label{eq:ppsd}
\end{IEEEeqnarray}
where $S_\omega^0\eqdef S_\omega(0)$ is the input PSD. This PSD is known as the GN model (PSD) in the literature \cite{inoue1992phase,poggiolini2012gn}. 

Note that, the signal energy is preserved in the NLS equation \eqref{eq:cnls}. However, the first-order
signal \eqref{eq:q-pert} and its consequent PSD \eqref{eq:ppsd} are
not energy preserving.

\begin{remark}
Alternatively, the GN PSD can be obtained
by simply approximating the nonlinear term
$|q|^2q$ by $|q^L|^2q^L$ in the NLS equation, 
\[
j\partial_z q=q_{tt}+2|q|^2q\approx q_{tt}+2|q^L|^2q^L, 
\]
where
$q^L_{\omega}(z)\eqdef\exp(jz\omega^2)q_{\omega}(0)$ is the solution of
the linear part of the NLS equation. 
 
\qed
\end{remark}

\begin{remark}

In the NLS equation with physical parameters \eqref{eq:nls-optics}, the GN PSD \eqref{eq:ppsd} is of order $\gamma^2$, 
where $\gamma$ is the nonlinearity coefficient. If instead of 
$q^{(1)}(\omega)$, $q^{(2)}(\omega)$ is used in \eqref{eq:s-pert}, additional terms are introduced to \eqref{eq:ppsd}. 
One of these terms is of order $\bigo{\gamma^2}$, arising from the interaction of the linear term with a nonlinear
quintic term in signal expansion. The GN PSD refers to the Fourier spectrum of the nonlinear term 
in \eqref{eq:q-pert}, ignoring
its interaction with other terms in the expansion of $q$.

\qed 
\end{remark}

\subsection{Discrete-frequency NLS Equation}
As in the continuous-frequency model, we use the solution of the
linear equation 
\begin{IEEEeqnarray}{rCl}
  q_k^{(0)}(z)=e^{j(\omega_0^2k^2-4\const{P}) z}q_k(0),
\label{eq:dnls-lin}
\end{IEEEeqnarray}
in \eqref{eq:dnls-integ} to obtain the first-order signal
\begin{IEEEeqnarray}{rCl}
  q_k^{(1)}(z)&=&e^{j(\omega_0^2k^2-4\const{P}) z}\Bigl(
q_k(0)+2jz|q_k(0)|^2q_k(0)\nn\\
&&+2 \sum\limits_{lmn\:\in\: \nr_k}
H(\Omega_{lmnk})(z)q_l(0)q_m(0)q_n^*(0)\Bigr),
\IEEEeqnarraynumspace
\label{eq:q-pert1}
\end{IEEEeqnarray}
where 
\begin{IEEEeqnarray}{rCl}
\Omega_{lmnk}\eqdef\omega_0^2(\ell^2+m^2-n^2-k^2).
\label{eq:Omega}
\end{IEEEeqnarray}
Note that $\Omega_{lmnk}\neq 0$, since singularities $l=k$ and $m=k$ have been
removed from $\mathcal{N}_k$. 

Ignoring the SPM term $2jz|q_k(0)|^2q_k(0)$ in \eqref{eq:q-pert1}, squaring and
averaging as before, the GN PSD is
\begin{IEEEeqnarray}{rCl}
  S_k^{\text{GN}}(z)=S_k^0+8\sum\limits_{n}\sum\limits_{l\neq
    k}\sum\limits_{m\neq k}|H_{lmnk}|^2S_l^0S_m^0S_n^0\delta_{lmnk}.
\IEEEeqnarraynumspace
\label{eq:d-gnpsd}
\end{IEEEeqnarray}
The cross terms between linear and nonlinear parts in
\eqref{eq:q-pert1} is zero, similar to the continuous case.

\subsection{Secular Behavior in the Signal Perturbation}
\label{sec:secular}
It can be seen that the second term in \eqref{eq:q-pert1}, corresponding to SPM, grows
unbounded with $z$.  Had the XPM not been removed, that
too would have produced a similar unbounded term. These degenerate FWM terms that tend
to infinity with $z$ are called \emph{secular terms} and make the series
divergent. As a result, \emph{regular perturbation theory} fails for the NLS equation.

The secular term of SPM can be removed using a multiple-scale
analysis. For this purpose, we introduce an additional independent slow
variable 
\[
\ell\eqdef \epsilon z,\quad q(t,z)\eqdef q(t,z,l),
\]
where now $2\epsilon \ll 1$ is the nonlinearity
coefficient. The NLS equation \eqref{eq:dnls} is transformed to
\[
 \partial_z
   q_k+\epsilon\partial_l q_k=j\omega_0^2k^2q_k-2j\epsilon\left(2\const{P}q_k-|q_k|^2q_k
+ \mathcal{N}_k\right).
\]
We expand $q_k$ in powers of $\epsilon$ and equate powers of
$\epsilon$ on both sides. We choose $\partial_lq_k=2j|q_k|^2q_k$ to remove the SPM singularity. Omitting details, 
the zero- and first-order terms \eqref{eq:dnls-lin}
and \eqref{eq:q-pert1} are, respectively, modified to 
\begin{IEEEeqnarray}{rCl}
q_k^{(0)}(z)&=&e^{j(\omega_0^2k^2-4\const{P}+2\epsilon|q_k(0)|^2)z}q_k(0),\nn\\
  q_k^{(1)}(z)&=&e^{j\left(\omega_0^2k^2-4\const{P}+2\epsilon|q_k(0)|^2\right)
    z}
\Bigl\{
q_k(0)\nn
\\
&&-\:2j\epsilon \sum\limits_{lmn\:\in\: \nr_k} H(\bar\Omega_{lmnk}) q_l(0)q_m(0)q_n^*(0) \Bigr\},
\IEEEeqnarraynumspace
\label{eq:q-pert2}
\end{IEEEeqnarray}
where
\begin{IEEEeqnarray*}{rCl}
 \bar\Omega &\eqdef&\Omega_{lmnk}+j\omega_0^2\epsilon(|q_l(0)|^2+|q_m(0)|^2-|q_n(0)|^2-|q_k(0)|^2).
\end{IEEEeqnarray*}
The PSD is given by \eqref{eq:d-gnpsd} with
$\Omega\rightarrow\bar\Omega$. 
It can be seen that the fast variable $z$ describes the rapid
evolution of $q_k$ in small distance scales. However, as $z$ is
increased, potentially important dynamics on large scales (where
$\epsilon z\approx 1$) can be
missed. In our example, the SPM term does indeed grow at scales of order $\bigo{\epsilon^{-1}}$. The role of the slow variable $l$ is to
describe dynamics at this long-haul scale. 

Secular terms seem to have been neglected in the
literature. This is because missing the sum with minus sign in
\eqref{eq:sum-or-and} ignores the SPM term in \eqref{eq:dnls}. However, typically energy is distributed over many Fourier modes
and $z|q_k(z)|^2q_k(z)$ is quite small. As a result, if $z$ is not
too large, the singular perturbation signal \eqref{eq:q-pert1} is a good
approximation and is simpler to use.

Secular terms appear in the continuous model too. Including the four terms missed in \eqref{eq:mu-123'1'2'3} gives the secular PSD 
contribution 
\begin{IEEEeqnarray}{rCl} 
S_{\omega}^{\textnormal{sec}}&=&4\int H_{123\omega}H_{1'2'3'\omega}^*\Bigl\{S_1S_{2}S_{3'}\left(\delta_{1'2}\delta_{2'3'}+\delta_{22'}\delta_{1'3'}\right)\delta_{13}\nn
\\
&&+\:
S_1S_2S_{3'}\left(\delta_{11'}\delta_{2'3'}+\delta_{12'}\delta_{1'3'}\right)\delta_{23}\Bigr\}\delta_{123\omega}\der\omega_{1231'2'3'}
\nn\\
&=&
16 z^2\const P^2 S_{\omega}.
\label{eq:S-sec}
\end{IEEEeqnarray}

Figs. \ref{fig:q-pert}(a)--(b) demonstrate the accuracy of the
first-order perturbation approximation \eqref{eq:q-pert1}. Here the strength of the 
nonlinearity is measured as the ratio $a(z)$ of the nonlinear and linear parts
of the Hamiltonian \cite{zakharov1992turb}
\begin{IEEEeqnarray*}{rCl}
 \mathcal H(z)\eqdef j\int\limits_{-\infty}^{\infty}\biggl(\underbrace{\left|\partial_t
    q(t,z)\right|^2}_{\text{linear}}-\underbrace{\left|q(t,z)\right|^4}_{\text{nonlinear}}\biggr)\der t.
\end{IEEEeqnarray*}

It can be seen in Figs. \ref{fig:q-pert}(a)--(b) that the perturbation
series rapidly diverges as $A$ is increased. Even in the pseudo-linear
regime where $a<0.1$, the error may not be small. Note that in the
focusing regime, the linear and
nonlinear parts of \eqref{eq:q-pert} add up destructively so that
$\norm{q^{(1)}_k(z)}<\norm{q_k(z)}=\norm{q_k(0)}$ and $q^{(1)}(t,1)$ is below $q(t,1)$ in Fig.~\ref{fig:q-pert}(b). However, 
in the PSD the sign is lost and the linear and nonlinear PSDs add up constructively, so that $S_k(z)$ stands above $S_k^0$ in Fig.~\ref{fig:psds}. As the
amplitude is increased, the nonlinear term grows and, regardless of
its angle, dominates the linear term. As a result, $q^{(1)}(t,z)$ goes
above $q(t,z)$ and $\norm{q^{(1)}}$ rapidly diverges to infinity. However, as we will see,
the error in the PSD is typically smaller due to the squaring
and averaging operations. 

\begin{figure}
\centering
\begin{tabular}{cc}
\includegraphics[width=4cm]{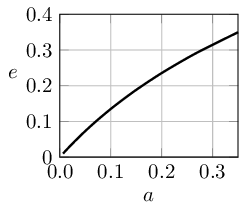}
&
\includegraphics{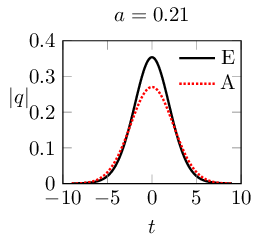}\\
~~~~~~(a) & ~~~~~~(b)
\end{tabular}
\caption{ First-order signal approximation in perturbation expansion method when
  $q(t,0)=A\exp(-t^2/2)$ and $z=1$. (a) Error $e=\norm{q-q^{(1)}}/\norm{q}$ as a function of the nonlinearity
  parameter $a$. (b) Exact (E) and approximate (A) signals when $a=0.21$ ($A=0.5$). 
}
\label{fig:q-pert}
\end{figure}


\section{Energy Transfer in the Frequency Domain}
\label{sec:energy-transfer}
In this section, we motivate the subsequent sections by explaining 
how energy is transferred among Fourier modes and why one might expect 
an asymptotically stationary PSD to the leading order in nonlinearity,
when the signal propagates 
according to the NLS equation.

We begin with a two-dimensional Fourier series restricted on the 
dispersion relation $\zeta=\omega_0^2k^2$
  \begin{IEEEeqnarray}{rCl}
    q(t,z)= \sum\limits_{k=-\infty}^{\infty}a_k(z)e^{j(k\omega_0 t+\omega_0^2k^2 z)}.
\label{eq:2d-fourier}  
\end{IEEEeqnarray}
Substituting 
\eqref{eq:2d-fourier} into the NLS equation, we get
\begin{IEEEeqnarray}{rCl}
  \partial_z a_k(z)=-2j\sum\limits_{lmn}e^{j\Omega_{lmnk}
    z}a_la_ma_n^*\delta_{lmnk},
\label{eq:da-dz}
\end{IEEEeqnarray}
where $\Omega_{lmnk}$ is defined in \eqref{eq:Omega} and the sum is
over all possible interactions $lm\rightarrow nk$.
The integrating factor $\exp(j\omega_0^2 k^2z)$ removes the additive dispersion term from the NLS equation and reveals it as an operator acting on
nonlinearity in \eqref{eq:da-dz}.  If $\Omega_{lmnk}\neq 0$ and $z$ is large, the exponential term oscillates rapidly and the
nonlinearity $a_la_ma_n^*$ is averaged out in integration over $z$, following the Riemann-Lebesgue
lemma. Therefore only modes lying on the resonant
manifold
\begin{IEEEeqnarray}{Cl}
&
\ell +m= n+k,
\IEEEyessubnumber
\label{eq:frequency}\\*
[-0.1\normalbaselineskip]
\smash{\left\{
\IEEEstrut[8\jot]
\right.} \nonumber
\\*[-0.2\normalbaselineskip]
&
\ell^2+m^2=n^2+k^2, 
\IEEEyessubnumber
\label{eq:phase}
\end{IEEEeqnarray}
contribute to the asymptotic changes in the Fourier mode $a_k$. 
This means that energy is transported in the frequency domain primary
via the resonant
interactions;  the influence of the non-resonant interactions on energy
transfer is small. The
frequency and phase matching conditions \eqref{eq:frequency} and
\eqref{eq:phase} respectively represent conservation of the energy and
momentum. 

In our example, the resonant manifold \eqref{eq:frequency}--\eqref{eq:phase} permits only trivial interactions
\begin{IEEEeqnarray}{rCl}
\{l=n, \: m=k\},\quad\text{or} \quad \{l=k, \: m=n\},  
\label{eq:triv-int}
\end{IEEEeqnarray}
describing SPM ($\ell=m$) and XPM ($\ell\neq m$). Separating out the
resonant indices from the sum in \eqref{eq:da-dz}, we get 
\begin{IEEEeqnarray}{rCl}
  \partial_z
  a_k(z)=j(-4\const{P}+2|a_k|^2)a_k-2j\mathcal{N}^{\nr}_k(a,a,a),
\label{eq:da-dz2}
\end{IEEEeqnarray}
where
\begin{IEEEeqnarray*}{rCl}
  \mathcal{N}^{\nr}_k(a,a,a)\eqdef\sum\limits_{lmn\:\in\: \nr_k}e^{j\Omega_{lmnk} z}a_la_ma_n^*\delta_{lmnk},
\end{IEEEeqnarray*}
contains only non-resonant quartets (the complement of the set \eqref{eq:triv-int}).
Non-resonant interactions constitute the majority of all
interactions, and since $\mathcal{N}^{\nr}_k\approx 0$, we
observe that, when viewed in the four dimensional space $(l,m,n,k)$,  most of the possible interactions
are nearly absent. 

Ignoring $N^{\nr}_k$ in \eqref{eq:da-dz2}, we obtain
\begin{IEEEeqnarray}{rCl}
   a_k(z)\approx j(-4\const{P}+2|a_k|^2)a_k,
\label{eq:da-dz3}
\end{IEEEeqnarray}
which does not imply any inter-modal interactions. In fact,
restoring the dispersion, we have
\begin{IEEEeqnarray*}{rCl}
q_k(z)\approx e^{j\left(\omega_0^2k^2-4\const{P}+2|q_k(0)|^2\right)z}q_k(0),
\end{IEEEeqnarray*}
which means $|q_k(z)|\approx |q_k(0)|$. This is because the resonant
quartets for the convex dispersion relation $\zeta=\omega^2$, $\omega=\omega_0k$, of the integrable NLS equation
consists of only trivial quartets \eqref{eq:triv-int}.

It follows that the signal spectrum is almost stationary. There are
small oscillations in the
spectrum due to small non-resonant effects, but because most of the possible interactions
between Fourier modes, responsible for spectral broadening, do not occur, a
localized energy stays localized and does not spread to
infinite frequencies. This also intuitively explains the lack of the equipartition,
and the periodic exchange, of the energy among Fourier modes in the 
Fermi-Pasta-Ulm (FPU) lattice \cite{yousefi2012nft1} ---  and generally in soliton systems.

Fig.~\ref{fig:two-modes} shows the evolution of modes $k=0$ and
$k=N/2$, where $N$ is the integer bandwidth, for 
input signal $q(t,0)=2\exp(-t^2/2)$  ($a(0)=5.65$) in the deterministic NLS equation. Despite local changes in distance, globally
the signal spectrum is not broadened monotonically, but rather oscillates. Here
evolution is continued for a very long distance $z=50$ (about
$10^5$ km in a standard optical system). This is not surprising given that the orbits of integrable
Hamiltonian systems in the phase space are periodic, confined to a torus. Sufficient perturbations 
to integrability break the characteristic 
oscillations in Fig.~\ref{fig:two-modes}, though for small perturbations the oscillations persist. Note that if 
the input is a stochastic process and, instead of $|q_k(z)|$, the PSD $\E |q_k(z)|^2$ is plotted, these local oscillations are further
averaged out so that the PSD is asymptotically almost stationary.  

The steady-state stationary PSD, without much transient spectral broadening, is a consequence of
integrability. Consider a non-integrable equation, \eg, by introducing
 a third-order dispersion to the NLS equation with dispersion relation
 $\zeta=\omega^3+3\omega^2$, $\omega=\omega_0k$. The resonant
manifold is
\begin{IEEEeqnarray}{Cl}
&
\ell +m= n+k,
\IEEEyessubnumber
\label{eq:resonant-example-a} 
\\*
[-0.1\normalbaselineskip]
\smash{\left\{
\IEEEstrut[8\jot]
\right.} \nonumber
\\*[-0.2\normalbaselineskip]
&
\ell^3+m^3+3(\ell^2+m^2)=n^3+k^3+3(n^2+k^2).
\IEEEyessubnumber
\label{eq:resonant-example-b} 
\end{IEEEeqnarray}
Since the dispersion relation $\zeta=\omega^3+3\omega^2$ is non-convex, 
the resonant manifold contains a larger number of quartets 
than the trivial ones in \eqref{eq:triv-int}, \eg,\ $(l, m, n, k)=(1,-3,0,-2)$.
It can be verified that non-trivial quartets are
\begin{IEEEeqnarray*}{rCl}
l+m=-2,\quad n+k=-2.
\end{IEEEeqnarray*}
 As before, ignoring $\mathcal{N}^{\nr}_k$,
equation \eqref{eq:da-dz3} now reads
\begin{IEEEeqnarray*}{rCl}
   a_k(z)\approx j(-4\const{P}+2|a_k|^2)a_k-2j\sum\limits_{\textnormal{nt}} a_la_ma_n^*,
\end{IEEEeqnarray*}
where the sum is over non-trivial quartets, \ie, the resonant quartets in \eqref{eq:resonant-example-a}--\eqref{eq:resonant-example-b}
excluding the trivial ones \eqref{eq:triv-int}. The coupling
introduced by non-trivial interactions creates a strong
energy transfer mechanism, causing substantial spectral broadening (or
narrowing, depending on the equation) and dispersing a localized energy to 
higher (lower) frequencies. Unlike the FPU lattice where energy is
exchanged periodically among a few Fourier modes, energy partitioning
continues until an equilibrium is reached. This can be a 
flat (equipartition) or non-flat stationary steady-state PSD, depending on the equation.

Note that if pulses have short duration, then $\omega_0\gg 1$ and the
dispersion operator inside the sum in $\mathcal{N}^{\nr}_k$ averages
out nonlinearity more effectively. This explains pseudo-linear
transmission in the wideband regime.

\begin{figure}
\centering
\includegraphics[scale=0.95]{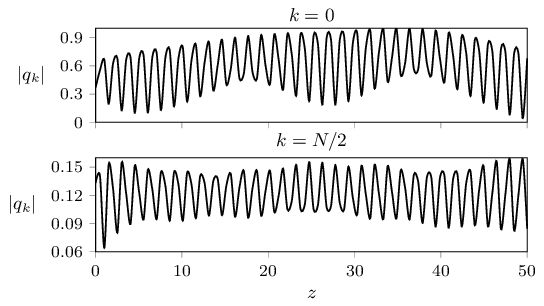}
\caption{Evolution of two Fourier modes in distance.}
\label{fig:two-modes}
\end{figure}

\begin{figure*}[t]
\centering
\includegraphics[width=\textwidth]{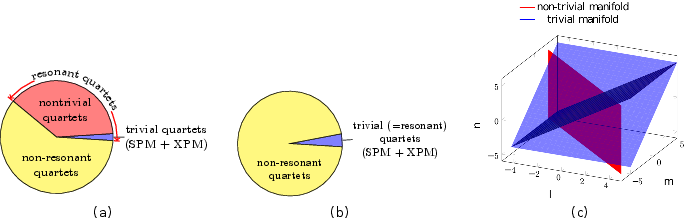}
\caption{ (a) Classification of quartets in a general nonlinear dispersive equation. When 
viewed in the four dimensional space $(l,m,n,k)$,
many quartets do not (or weakly) interact. Transfer of energy
occurs primarily among the resonant quartets. (b) In the integrable NLS
equation, the dispersion relation is convex, as a result, the resonant
quartets include only trivial quartets, which are quite sparse in the
whole space. In addition, in energy-preserving NLS equation, trivial
quartets do not interact. Only weak interactions due to non-resonant
quartets are left. (c) Resonant manifold for dispersion relation $k=\zeta^3+3\zeta^2$. Energy
flows on the red plane $l+m+2=0$. The two blue planes $n=l$ and $n=m$ form the trivial manifold. No energy flows
on the blue planes.}
\label{fig:quartets-classification}
\end{figure*}

To summarize, one can divide four-wave interactions into resonant and
non-resonant interactions. Transfer of energy takes place primarily among the
resonant modes and via the resonance mechanism. The resonant quartets
are themselves divided into trivial and
non-trivial quartets. Trivial quartets represent SPM and XPM and, in
the energy-preserving integrable NLS equation, do not cause
interaction. Non-trivial interactions, which are absent in
the integrable equation, cause coupling and transfer of energy among
all resonant modes. This
occurs when higher order dispersion or nonlinear terms are introduced
in the integrable NLS equation. The
redistribution of energy among Fourier modes continues until an equilibrium (which is generally not an equipartition) 
is reached after a transient evolution. See Fig.~\ref{fig:quartets-classification}.


\section{The KZ Model Power Spectral Density}
\label{sec:kz-model}  
In this Section, we obtain the basic KZ PSD, in a single-channel single-span
optical fiber with  no loss and higher-order dispersion terms.

\subsection{Kinetic Equation of the PSD}
\label{sec:kinetic}

We assume that the signal is strongly stationary so that
\eqref{eq:wss} holds. In particular 
\begin{IEEEeqnarray*}{rCl}
R(t_1,t_2;z)=R(\tau;z),\quad \tau \eqdef t_2-t_1.
\end{IEEEeqnarray*}
As shown in Appendix~\ref{app:cumulants-wss}, stationarity 
implies that the signal is uncorrelated in the frequency domain
\begin{IEEEeqnarray}{rCl}
\mu_{12}(z)=S_{k}(z)\delta_{12},
\label{eq:psd}
\end{IEEEeqnarray}
where $S_{k}(z)=\mathcal{F}_s(R(\tau;z))(k)$. 

Often the phase of a signal in a nonlinear
dispersive equation varies rapidly compared to the slowly-varying
amplitude. Furthermore, in some applications such as ocean waves, it is
natural to assume that the initial data is random. This suggests a
statistical approach, such as that in the turbulence theory. Here the
evolution of the $n$-point spectral cumulants is described.

The NLS equation \eqref{eq:cnls} consists of a linear term involving
$q$ and a nonlinear
term $|q|^2q$. As a result, the evolution of the $2$-point moment is
tied to the $4$-point moment, the evolution of the
$4$-point moment is tied to the $6$-point moment, and so
on. For reasons explained in Section~\ref{sec:energy-mechanisms}, we work 
with cumulants. Multivariate moments and cumulants are interchangeable via 
\eqref{eq:cumulants-to-moments} and \eqref{eq:moments-to-cumulants} in
Appendix~\ref{app:cumulants-moments}. As a result, one obtains
recursive differential equations for $2n$-point cumulants, each equation
depending on cumulants up to $(2n+2)$-point. In strongly
nonlinear systems, higher order cumulants are not negligible and the
hierarchy of cumulant equations does not truncate. This makes strong
turbulence, traditionally encountered in
solid-state physics and fluid dynamics, a difficult problem.  However, in weakly
nonlinear systems,  statistics are close to Gaussian and consequently
higher-order cumulants can be neglected. As a result, a closure of the
hierarchy of the cumulant equations is reached. This gives
rise to a kinetic equation for the PSD. In WWT, kinetic
equations can often be solved using, \eg, Zakharov conformal
transformations. The resulting solutions are known as
Kolmogorov-Zakharov spectra. 

For non-integrable equations, kinetic
equations indicate a monotonic transfer of energy to higher or lower frequencies 
(direct and reverse energy cascade) in the first order in
nonlinearity. However, for integrable equations, kinetic
equations immediately predict a stationary PSD to the first order.
Nevertheless, for the NLS equation, the kinetic equation can be solved
to the second order in the nonlinearity to account for changes in the PSD
that are observed in numerical and experimental studies of the
integrable NLS equation.

A differential equation for the 2-point moment $\mu_{kk}\eqdef S_k$ can be obtained straightforwardly: 
\begin{IEEEeqnarray}{rCl}
  \frac{\der S_k}{\der z}&=&\E\left\{ q_k^*\partial_zq_k+\cc\right\} \nn\\
&=&\E\Bigl\{ q_k^*\bigl(
j\omega_0^2 k^2 q_k-2j\sum q_lq_mq_n^*\delta_{lmnk}
\bigr)+\cc\Bigr\}\nn\\
&=&
j\omega_0^2 k^2 S_k-2j\sum  \mu_{lmnk}\delta_{lmnk}+\cc\nn\\
&=&4\sum  \Im(\mu_{lmnk})\delta_{lmnk},
\label{eq:dS-dz}
\end{IEEEeqnarray}
where $\cc$ stands for complex conjugate. To the zero order in the nonlinearity, the signal distribution is
Gaussian and $\tilde{S}_{lmnk}=0$. As a result, $\Im(\mu_{lmnk})=0$ and
$\der S_k/\der z=0$. 

In the first order in the nonlinearity, the
evolution of the 4-point moment is
\begin{IEEEeqnarray}{rCl}
\frac{\der \mu_{lmnk}}{\der z}&=&  
\E\bigl\{(\partial_z q_l)q_mq_n^*q_k^*\bigr\}+\cdots\nn\\
&=&
\E\Bigl\{\Bigl(j\omega_0^2 l^2 q_l-2j\sum\limits_{l'm'n'} q_{l'}q_{m'}q_{n'}^*\delta_{l'm'n'l}\Bigr)q_mq_n^*q_k^*
\Bigr\}\nn\\
&&+\:\cdots\nn\\
&=&j \Omega_{lmnk}\mu_{lmnk}-2j\sum\limits_{l'm'n'} \Bigl\{\mu_{l'mm'nn'k}\delta_{l'm'n'l}\nn \\
&&\:+
\mu_{ll'm'nn'k}\delta_{l'm'n'm}-\mu_{lmn'l'm'k} \delta_{l'm'n'n}
\nn\\
&&\:-
\mu_{lmn'l'm'n}\delta_{l'm'n'k}
\Bigr\}.
\label{eq:d-S1234}
\end{IEEEeqnarray}

In the discrete model, dispersion is a multiplication by a
unitary matrix. The linear and nonlinear parts of the NLS dynamics are mixing
processes in time and frequency. When the input signal is quasi-Gaussian and signal phase is uniformly
distributed in $z$, these mixing processes maintain the quasi-Gaussian distribution, in 
the view of the central limit theorem. As long as the signal phase is uniform
and nonlinear interactions are weak, this is an excellent
approximation. 

It follows that, under the assumption that there are a large number of
Fourier modes in weak interaction, and that the distribution of $q_k(0)$ is
quasi-Gaussian, we can assume that the distribution of $q_k(z)$
remains quasi-Gaussian,
as defined in \eqref{eq:quasi-gaussian}. Consequently, the four
6-point moments in \eqref{eq:d-S1234} break down 
in terms of the 2-point moments 
\begin{IEEEeqnarray*}{rCl}
 \mu_{l'mm'nn'k}&=&
 S_mS_nS_k\left(\delta_{mn'}  \delta_{l'n} \delta_{m'k}+ \delta_{mn'} \delta_{l'k}  \delta_{nm'}\right)\\
&&\:+
S_mS_nS_{n'}\left(\delta_{mk}\delta_{\ell'
    n}\delta_{m'n'}+\delta_{mk}\delta_{\ell 'n'}\delta_{m'n}\right)\\
&&\:+
S_mS_{n'}S_{k}\left(\delta_{mn}\delta_{\ell'
    n'}\delta_{m'k}+\delta_{mn}\delta_{\ell 'k}\delta_{m'n'}\right),
\end{IEEEeqnarray*}
\begin{IEEEeqnarray*}{rCl}
\mu_{l'm'lnn'k} &=&
 S_lS_nS_k\left(\delta_{ln'}\delta_{l'n}\delta_{m'k}+ \delta_{ln'} \delta_{l'k}\delta_{nm'}\right)
\\ &&\:+
 S_{\ell '}S_{n}S_k\left(\delta_{l'n'}\delta_{m'n}\delta_{\ell
     k}+\delta_{l'n'}\delta_{m'k}\delta_{\ell n}\right)
\\ && \:+
S_{m'}S_{n}S_{k}\left(\delta_{m'n'}\delta_{\ell n}\delta_{\ell' k}+\delta_{m'n'}\delta_{\ell k}\delta_{\ell' n}\right),
\end{IEEEeqnarray*}
\begin{IEEEeqnarray*}{rCl}
\mu_{lmn'l'm'k} &=&
 S_lS_mS_k\left( \delta_{n'k} \delta_{ll'}\delta_{mm'}+ \delta_{n'k} \delta_{lm'}\delta_{ml'}\right)
\\
&&+\: S_lS_mS_{n'}\left(\delta_{n'\ell '}\delta_{\ell m'}\delta_{mk}+\delta_{n'\ell'}\delta_{lk}\delta_{mm'}\right)
\\
 &&+\:S_kS_{\ell'}S_{n'}\left(\delta_{n'm '}\delta_{\ell \ell'}\delta_{mk}+\delta_{n'm'}\delta_{lk}\delta_{m\ell'}\right),
\end{IEEEeqnarray*}
\begin{IEEEeqnarray*}{rCl}
\mu_{lmn'l'm'n} &=&
 S_lS_mS_n\left(\delta_{n'n} \delta_{ll'}\delta_{mm'}+\delta_{n'n} \delta_{lm'}\delta_{ml'}\right)
\\
&&\:+
 S_lS_mS_{n'}\left(\delta_{n'\ell'}\delta_{\ell
     m'}\delta_{mn}+\delta_{n'\ell'}\delta_{\ell n}\delta_{mm'}\right)
\\
&&
\:+ S_lS_mS_{n'}
\left(\delta_{m'n'}\delta_{\ell\ell'}\delta_{mn}+\delta_{m'n'}\delta_{\ell n}\delta_{mn'}\right).
\end{IEEEeqnarray*}

Summing over primed variables in \eqref{eq:d-S1234}, the first two terms in
the four expressions above
add up to $4jT_{lmnk}\delta_{lmnk}$, where 
\begin{IEEEeqnarray*}{rCl}
  T_{lmnk}(S,S,S)(z)&\eqdef& S_lS_mS_n+S_lS_mS_k \\
&& -\: S_lS_nS_k-S_mS_nS_k
\end{IEEEeqnarray*}
is the \emph{collision term}. The last four terms in the four expressions simplify to zero in \eqref{eq:d-S1234}.
Canceling
$\delta_{lmnk}$ in the resulting equation, it follows that
\begin{IEEEeqnarray}{rCl}
  \frac{\der S_{lmnk}}{\der z}&=&j\Omega_{lmnk}S_{lmnk}+4jT_{lmnk}(S,S,S)(z),
\label{eq:dSlmnk-dz}
\end{IEEEeqnarray}
where, recall that $\mu_{lmnk}=S_{lmnk}\delta_{lmnk}$.

In the standard WTT approach, it is assumed that $S_{lmnk}$ varies
slowly. As a result, $\der S_{lmnk}/\der z\approx 0$ in
\eqref{eq:dSlmnk-dz}, thus
$S_{lmnk}=-4T_{lmnk}/\Omega_{lmnk}$. If $\Omega_{lmnk}=0$, $S_{lmnk}$ can not be determined from 
\eqref{eq:dSlmnk-dz}. Replacing $\Omega_{lmnk}$ with $\Omega_{lmnk}-j\epsilon$ and 
using the Kramers-Kronig relations \cite[Lemma 1]{yousefi2015nfdm}, we 
get $\Im(\frac{1}{x-j\epsilon})=-\pi\delta(x)$ in the sense of distributions. This gives
$\Im(S_{lmnk})=4\pi T_{lmnk}/\Omega_{lmnk}\delta(\Omega_{lmnk})$ and, subsequently, the
standard kinetic equation for the NLS equation
\begin{IEEEeqnarray*}{rCl}
  \frac{\der S_k}{\der z}=16\pi\sum \frac{T_{lmnk}}{\Omega_{lmnk}}\delta(\Omega_{lmnk})\delta_{lmnk}.
\end{IEEEeqnarray*}
The product of the two delta functions dictates resonant (trivial) interactions
\eqref{eq:triv-int}. This means energy transfer occurs primarily among
resonant modes. However, for resonant interactions $T_{lmnk}=0$, and a
stationary spectrum is obtained. The stationarity of
the turbulence spectrum of integrable systems is discussed in \cite{zakharov2009tis}. 

However, it can be seen in \eqref{eq:dSlmnk-dz} that even if
$T_{lmnk}$ is slowly varying, \eg, $T_{lmnk}=0$, $S_{lmnk}$ oscillates
with spatial frequency $\Omega_{lmnk}$ for non-resonant quartets, for
which $\Omega_{lmnk}\neq 0$. 
This linear dynamics modulates the collision term in \eqref{eq:dSlmnk-dz}. Since we are interested in non-stationary spectrum,
we cannot assume $\der S_{lmnk}/\der z\approx 0$, and evolution of 
$S_{lmnk}$, due to non-resonant interactions, has to be accounted for in the next
order. This is very easy to perform and has been pointed out in
\cite{suret2011wave} as well. 

The integral form of \eqref{eq:dSlmnk-dz} is
\begin{IEEEeqnarray}{rCl}
 S_{lmnk}(z)&=&e^{j\Omega_{lmnk}z}S_{lmnk}(0)\nn\\
&&+\:4j\int\limits_0^z
  e^{j\Omega_{lmnk}(z-z')} T_{lmnk}(S,S,S)(z')\der z'.
\IEEEeqnarraynumspace
\label{eq:dSlmnk-dz-integ}
\end{IEEEeqnarray}

Since resonant interactions \eqref{eq:triv-int} do
not contribute to $\der S_k/\der z$, below we include only non-resonant
interactions for which $\Omega_{lmnk}\neq 0$ and $S_{lmnk}(0)=\tilde{S}_{lmnk}(0)$.
Substituting \eqref{eq:dSlmnk-dz-integ} into \eqref{eq:dS-dz}, we
obtain the kinetic equation for $S_k$
\begin{IEEEeqnarray}{rCl}
  \frac{\der S_{k}}{\der
    z}&=&4\sum\limits_{lmn\:\in\: \nr_k}\Im\left(e^{j\Omega_{lmnk}z}\tilde S_{lmnk}(0)\right)\nn\\
&&\hspace{-1cm}+\:16\epsilon^2
\sum\limits_{lmn\:\in\: \nr_k} \int_0^z
\cos(\Omega_{lmnk}(z-z'))T_{lmnk}(z')\der z',
\IEEEeqnarraynumspace
\label{eq:kinetic}
\end{IEEEeqnarray}
where parameter $\epsilon$ is introduced to use it below.
 
The kinetic equation \eqref{eq:kinetic} is a nonlinear cubic equation
similar to the NLS equations. However, now the 
rapidly-varying variables are averaged out and the PSD evolves 
very slowly so that the perturbation theory is better applicable.
We thus solve \eqref{eq:kinetic} perturbatively, writing
\begin{IEEEeqnarray*}{rCl}
  S_k(z)&=&S_k^{(0)}(z)+\epsilon S_k^{(1)}(z)+\cdots,
\\
  S_{lmnk}(z)&=&S_{lmnk}^{(0)}(z)+\epsilon S_{lmnk}^{(1)}(z)+\cdots.
\end{IEEEeqnarray*}
For the zero-order term we obtain
\begin{IEEEeqnarray*}{rCl}
S_k^{(0)}(z)=S_k^0+4\Re\Bigl(\sum\limits_{lmn\:\in\: \nr_k} H_{lmnk}\tilde
S_{lmnk}(0)\Bigr).
\end{IEEEeqnarray*}
If the input signal is quasi-Gaussian, $\tilde S_{lmnk}(0)\approx 0$
and the contribution of the 
second term to the PSD can be typically
ignored. Consequently, we can substitute $S_k^{(0)}(z)=S_k^0$ in the
equation of the next order. Omitting details, we obtain 
\begin{IEEEeqnarray}{rCl}
  S_k^{\text{KZ}}(z)=S_k^0&+& 8 \epsilon^2\sum\limits_{\substack{l\neq k\\m\neq
    k}}|H_{lmnk}(z)|^2 T_{lmnk}^0\delta_{lmnk}.
\label{eq:wtpsd}
\end{IEEEeqnarray}

Note that if $z\rightarrow\infty$, $S_k^{\text{KZ}}$ is \emph{approximately} stationary. Equation \eqref{eq:wtpsd} is the
KZ PSD.

\subsection{KZ Model Assumptions}
In this subsection,  we summarize the assumptions of the KZ model and comment on
their validity in the context of fiber-optic data communications.

\paragraph{Fourier transforms $q_k$ and $S_k$ exist}
 Particularly,
  $R(\tau;0)$ should 
vanish as $|\tau|\rightarrow \infty$. 

This assumption is valid in data
communications because signals have finite energy and time duration.

\paragraph{The input signal is strongly stationary}
This ensures that the $2n$-point moments are concentrated on stationary
manifolds. In particular, $q_k$ are uncorrelated, as stated in \eqref{eq:psd}. 
The delta functions that follow from this assumption simplify the
collision term in \eqref{eq:d-S1234}.

This assumption is valid in uncoded OFDM systems,
  where sub-carrier symbols are independent and the transmitted signal
  is cyclostationary. However, in coarse WDM systems
  the time-domain pulse shape can make the transmitted signal non-stationary and
  cause correlations in the frequency domain.

\paragraph{Signal has quasi-Gaussian distribution for all $z$ in the sense of \eqref{eq:quasi-gaussian}} 

In particular the input signal must be quasi-Gaussian. Under random
phase approximation \cite{zakharov1992turb}, the flow of the NLS
equation would then ensure that
the signal remains quasi-Gaussian in the weak nonlinearity
framework. This assumption is needed in
\eqref{eq:4-point}--\eqref{eq:6-point} to close the cumulant equations.
 
The integrable NLS equation
in the focusing regime has stable soliton solutions. As pointed out in
\cite{zakharov2004owt}, the solitonic regime, in which the
nonlinearity is strong, can act against 
the dispersive mixing of the weak nonlinearity regime. We assume that for
random input the coherence is not developed. This means that the
interference spectrum in the focusing and defocusing regimes are the same.

To summarize, Assumptions b) and c) may fail in data
communications. However, the WWT approach can be
re-worked out without using these assumptions. The price to pay is that the closure
is achieved at orders above six (see
\eqref{eq:4-point}--\eqref{eq:6-point}) and the expressions are not as simple. In Section \ref{sec:wdm-application},
we obtain the KZ spectrum for a WDM input signal with and without
Assumptions b) and c).


\section{Comparing the KZ and GN Models}
\label{sec:kz-vs-gn}

In this section we explain how the KZ model differs from the GN model.

\subsection{Differences in Assumptions}

The GN model assumes a perfectly Gaussian distribution compared with
the less stringent quasi-Gaussian assumption of the KZ model. Note that in the
presence of the four-wave interactions $lm\rightarrow nk$, higher
order moments are encountered. If a closure is to be reached, any
perturbative method requires reducing high-order moments to low-order ones, \ie, the quasi-Gaussian
assumption at some order. For example, the breakdown of the 6-point moments is also required in
the GN model, in closing \eqref{eq:s-pert} for the 2-point moment.

The GN PSD in some scenarios has been modified to account for a fourth-order non-Gaussian
noise \cite{dar2013pnp} (see Remark~\ref{rem:correction-terms}). Its perturbation expansion can also be carried
out to higher orders to improve the accuracy and account for deviations
from the Gaussian distribution. However, given the same assumptions,
the GN and KZ PSDs are still different. Furthermore, to calculate moments methodically,
one ends up using WWT framework anyways.

\subsection{Differences in PSD}
To connect the KZ and GN models, we wrote the modified kinetic
equation and the KZ spectrum \eqref{eq:wtpsd} in terms of the
same kernel $H_{lmnk}$ that appears in
the GN model. As a result, from \eqref{eq:wtpsd} it can be readily seen that
\begin{IEEEeqnarray*}{rCl}
  S^{\text{KZ}}_k(z)=S^{\text{GN}}_k(z)-\Delta S_k(z),
\end{IEEEeqnarray*}
where
\begin{IEEEeqnarray*}{rCl}
\Delta S_k&\eqdef&8S_k^0\sum\limits_{lmn\:\in\: \nr_k}|H_{lmnk}(z)|^2\left(S_l^0S_n^0+S_m^0S_n^0-S_l^0S_m^0\right).  
\end{IEEEeqnarray*}
That is to say,  the KZ PSD modifies the GN PSD by subtracting $\Delta S_k$
from it. That makes the KZ PSD at any order $n$ in perturbation expansion as accurate as GN
PSD at order $n+1$.  The improvement might be small in current systems
operating near the pseudo-linear regime, however, as the signal amplitude is
increased the GN PSD rapidly diverges from the true PSD.

The KZ PSD is energy-preserving unlike the GN PSD. Perturbation
expansion in signal breaks the structure of the NLS equation, so that
some important features of the exact equation can be
lost. For example, the average signal power according to the GN PSD is
\begin{IEEEeqnarray*}{rCl}
  \const{P}(z)=\const{P}(0)+\int\limits_{-\infty}^\infty
  |H_{123\omega}|^2S_1^0S_2^0S_{3}^0\delta_{123\omega}\der\omega_{123\omega}.
\end{IEEEeqnarray*}
It is seen that the signal
power is not preserved (see also Fig. \ref{fig:q-pert}(b)). This is
because, at any
order in perturbation, ignoring the energy of the higher-order terms
breaks energy conservation.
 
In contrast, in the KZ model, noting the symmetries 
\begin{IEEEeqnarray}{rCl}
H_{lmnk}=H_{mlnk}=H_{lmkn},\quad |H_{lmnk}|=|H_{nklm}|,  
\label{eq:symmetries}
\end{IEEEeqnarray}
and the similar ones for $\delta_{lmnk}$, we have
\begin{IEEEeqnarray*}{rCl}
\sum\limits_{\substack{lmnk}}|H_{lmnk}|^2
 S_l^0S_m^0S_n^0\delta_{lmnk}
=\sum\limits_{\substack{lmnk}}|H_{lmnk}|^2 S_l^0S_n^0S_k^0\delta_{lmnk},
\IEEEeqnarraynumspace
\end{IEEEeqnarray*}
where we substituted $lmnk\leftrightarrow nklm$. 
It follows that $\sum\limits_k S_k^{\text{KZ}}(z)=\sum\limits_k S_k(0)$,
\ie, the KZ model is energy-preserving. Other conservation laws exist
for kinetic equations \cite{zakharov1992turb}.

Fig.~\ref{fig:psds} compares the power spectral density of the GN
and KZ models. Here the input is a zero-mean Gaussian process with
$S^0_k=A^2\exp(-\omega_0^2 k^2)$ with $A=3\sqrt{2\pi/N}$,
$\omega_0=2\pi/N$, $N=2048$. The simulated output PSD is 
measured at $z=1$ over 10000 input instances in a single-channel NLS equation. Despite being in the nonlinear regime ($a(1)=4.29$),
the KZ PSD still approximates the simulated PSD remarkably well. Note
that $S_k^{\text{KZ}}$ crosses the $S^0_k$ curve so that it has the same
area, while the GN model PSD is well
above both the $S^0_k$ and the simulated PSD. Therefore the GN model is pessimistic, predicting a
higher interference than the actual one.

For the GN PSD to converge, a small power ($\sim$ 0.5 \rm{mW})
has to be distributed over a large bandwidth so that $\norm{q_k}\ll
1$ and the cubic term in $S_k^{\text{GN}}$ does not grow.

\begin{figure}
\centering
\includegraphics[width=8cm]{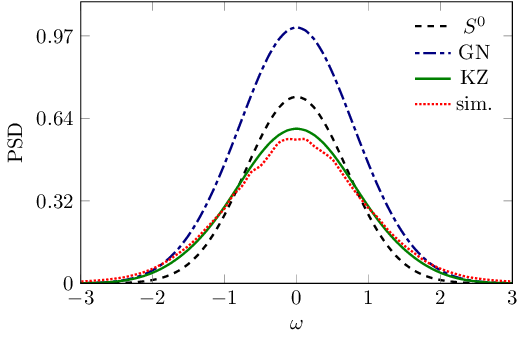}
\caption{Comparing the (normalized) PSD of the GN and KZ models with
  the simulated (sim.) PSD.}
\label{fig:psds}
\end{figure}

\subsection{Differences in Energy Transfer Mechanisms}
\label{sec:energy-mechanisms}
Since the GN model assumes a Gaussian distribution, only $\Re(\mu_{lmnk})$ and  $\Re(\mu_{lmnrpq})$ are responsible for spectrum
evolution. In contrast, changes in the KZ PSD stem merely from
$\Im(\mu_{lmnk})$. 
Because of the important factor $j$ in the NLS equation,
$\Re(\mu_{lmnk})$ does not contribute to changes in PSD. Arbitrary 
non-Gaussian statistics can occur along $\Re(\mu_{lmnk})$
without impacting the PSD.  Under the assumptions of the GN model (that the probability 
distribution is Gaussian in evolution),
$\Im(\mu_{lmnk})=0$ and the KZ model (correctly) predicts a
stationary spectrum. Consequently, deviations from Gaussianity are necessary for
any spectral change.

A problem with the signal perturbation, and consequently with
the GN model, is that here one works with moments, not
the cumulants as in the KZ model. Moments of a scalar Gaussian random variable $X$,
are $\E|X|^{2n}=(\E|X|^2)^{n}(2n-1)!!$, where $n!!=n(n-2)(n-4)\cdots$, which grow with $n$.
Higher-order moments cannot be ignored in the analysis. 
In contrast, higher-order cumulants are zero for Gaussian distribution and as the amplitude is
increased, they are gradually generated sequentially in increasing
order. The fact that cumulants are centered around a Gaussian distribution
makes them suitable for use in a perturbation theory around the
linear solution.

\subsection{Differences in Probability Distributions}
From the previous discussion, it follows that signal distribution in
the KZ model is a zero-mean non-Gaussian distribution with the following moments:
Asymmetric moments are zero; 2-point moment is given by the KZ PSD
\eqref{eq:wtpsd}; 4-point moment is given by
\eqref{eq:dSlmnk-dz-integ}; other higher order moments are given in terms of the $2$- and $4$-point
moments according   to \eqref{eq:cumulants-to-moments}, with zero $2n$-point cumulants, $n\geq 3$.


\section{Application to WDM}
\label{sec:wdm-application}

One application of the PSD is to estimate the interference power in
WDM systems. In models where the XPM amounts to a constant phase shift, the
interference at frequency $k$ is non-degenerate FWM, as well as part of the
degenerate FWM; see
Fig.~\ref{fig:quartets} and Example~\ref{ex:interference}. However, in
the WDM
literature often the whole FWM is treated as interference. That is to
say, all of the nonlinearity $\mathcal{N}_k$ in the NLS equation
\eqref{eq:dnls} is treated as noise. The corresponding spectra $S_k^{\text{KZ}}$ and $S_k^{\text{GN}}$
include self- and cross-channel interference.

Consider a WDM system with $2N+1$ users, each
having bandwidth $\Omega_s$.  In WDM the following 
(baseband) signal is sent over the channel
\begin{IEEEeqnarray}{rCl}
  q(t,0)=\sum\limits_{m=-N}^{N}\left(\sum\limits_{l=1}^Ma_m^l\phi^l(t)\right)e^{jm\Omega_s
  t},
\label{eq:wdm-signal}
\end{IEEEeqnarray}
where $l$ and $m$ are time and user indices, $\Omega_s$ is the user bandwidth, and $\phi^l(t)$ is an
orthonormal basis for the space of finite-energy $T$-periodic signals with Fourier
transform in $[-\Omega_s/2 , \Omega_s/2]$. Finally, $a_m^l$ is a sequence of complex-valued random
variables, independent between users, but potentially correlated within each user, \ie, 
\begin{IEEEeqnarray*}{rCl}
  \E a_m^la_{m'}^{l'*}= \mu^{ll'}_{mm}(a)\delta_{mm'},
\end{IEEEeqnarray*}
where $\mu^{ll'}_{mm}(a)\eqdef\E a_m^la_m^{l'*}$ is the symbols correlation
function of the
user $m$, due to, \eg, channel coding. The set of frequencies of the user
$m$, $-N\leq m\leq N$, is 
\begin{IEEEeqnarray*}{rCl}
  A_{m}=\Bigl\{ m\Omega_s+k\Omega_0 \:|\: -N_0/2\leq k< N_0/2\Bigr\},
\end{IEEEeqnarray*}
where $N_0\eqdef\lfloor\Omega_s/\Omega_0\rfloor$ and $\Omega_0\eqdef 2\pi/T$.

\subsection{Stationary Gaussian WDM Signals}
In this case, the assumptions of the GN and KZ models are satisfied.
 The interference ``spectrum'' is
\begin{IEEEeqnarray*}{rCl}
  S_k^{\text{NL}}=8\sum\limits_{lmn\:\in\: \nr_k} |H_{lmnk}|^2T_{lmnk}, 
\end{IEEEeqnarray*}
where for the KZ model $T_{lmnk}$ is the collision term, and for the
GN model $T_{lmnk}=S_lS_mS_n$. The intra (self)-channel interference
for the central user $m=0$ is the part of the sum in
$S_k^{\text{NL}}$ where $l,m,n\in A_0$. This is somewhat
similar to SPM. The rest
of terms, where at least one index is in the complement set $\bar A_0$, is the inter-channel
interference.  This is divided into three parts: 1) exactly two
indices are in $A_0$ (1-wave interference) 2) exactly one index is in
$A_0$  (2-wave interference) 3) no index is in $A_0$ (3-wave
interference). The 1-wave interference has fewer terms than the
others and can be ignored. The 2-wave interference is akin to XPM but
is not similarly averaged out and should be accounted for. 

Note that the net interference is zero in the KZ
model, \ie, $S_{k}^{\text{NL}}$ is negative for some $k$. 
 
\subsection{Non-stationary non-Gaussian WDM Signals}

The correlation function of the WDM signal \eqref{eq:wdm-signal} is
\begin{IEEEeqnarray*}{rCl}
  R(t_1,t_2)&=&\sum\limits_{mll'}\mu_{mm}^{ll'}(a)\phi^l(t_1)\phi^{l'*}(t_2)\exp(-jm\Omega_s(t_2-t_1))\\
&\overset{(a)}{=}&\const{P}_0\sum\limits_{l=1}^M\phi^l(t_1)\phi^{l*}(t_2)E(t_2-t_1),
\end{IEEEeqnarray*}
where $E(x)\eqdef\sum_m\exp(-jm\Omega_s x)$ and step $(a)$ follows under the additional assumption that $a_m^l$
is \iid, so that
$\mu_{mm}^{ll'}=\const{P}_0\delta_{ll'}$, $\const{P}_0\eqdef\E|a_m^l|^2$. Unless  in special cases, \eg,
$\phi^l(t)=\exp(jl\Omega_0 t)$, the input  signal is not a 
stationary process. This can be seen in the frequency domain too. The Fourier series
coefficients are
\begin{IEEEeqnarray}{rCl}
  q_k=\sum\limits_{lm}a_m^l\phi_{mN_0+k}^l,
\label{eq:qk}
\end{IEEEeqnarray}
where 
\begin{IEEEeqnarray*}{rCl}
\phi^l_k= 
 \begin{cases}
\mathcal{F}_s(\phi^l(t))(k), & -N_0/2\leq k<N_0/2, \\
0, & \text{otherwise}.
\end{cases}
\end{IEEEeqnarray*}
The orthogonality of $\phi^l(t)e^{jn\Omega_s t}$ and $\phi^{l'}(t)e^{jn'\Omega_s t}$ in the frequency domain reads
\begin{IEEEeqnarray*}{rCl}
  \sum\limits_{k=-\frac{N_0}{2}}^{\frac{N_0}{2}-1}\phi_{k+nN_0}^{l}\phi_{k+n'N_0}^{*l'}=\delta_{ll'}\delta_{nn'}.
\end{IEEEeqnarray*}
The 2-point spectral moment at $z=0$ is
\begin{IEEEeqnarray}{rCl}
  \mu_{12}&=&\const{P}_0\sum\limits_{lm}\phi_{mN_0+k_1}^l\phi_{mN_0+k_2}^{*l}.
\label{eq:correlations}
\end{IEEEeqnarray}
In general $\mu_{12}\neq \mu_{11}\delta_{12}$, unless in special cases, \eg, if $k_1$ and $k_2$ belong to
two different users, or
$\phi^l_k=\phi_0\delta_{k}$, or $\phi^l_k=\phi_0\exp(j\frac{2\pi}{M}kl)$.
 
In addition to the stationarity Assumption b), Gaussianity Assumption c) may also not hold in WDM. In 
particular the input distribution is arbitrary.  For non-Gaussian inputs, the cumulant
$\kappa_{123456}(0)$ should be included. 

The correlations and non-Gaussian input statistics can be introduced into 
the GN and KZ models using $\mu_{ij}$ and $\kappa_{123456}$. 
Repeating the analysis in the paper, the GN and KZ PSDs in WDM are
 \begin{IEEEeqnarray*}{rCl}
  S_k^{\text{GN}}&=&S^0_k+4\sum\limits_{\nr_{k}, \nr_{k'}}H_{lmnk}H_{l'm'n'k}\mu_{lmn'l'm'n},\\
S_k^{\text{KZ}}&=&S^0_k+8\sum\limits_{\nr_k}|H_{lmnk}|^2T_{lmnk},
\end{IEEEeqnarray*} 
where
\begin{IEEEeqnarray}{rCl}
  T_{lmnk}&\eqdef&\frac{1}{2}
\sum\limits_{l'm'n'}\Bigl(
\mu_{l'mm'nn'k}\delta_{l'm'n'k}+
\mu_{ll'm'nn'k}\delta_{l'm'n'm}\nn\\
&&-\:
\mu_{lmn'l'm'k}\delta_{l'm'n'n}-
\mu_{lmn'l'm'n}\delta_{l'm'n'k}
\Bigr).
\label{eq:Tlmnk}
\end{IEEEeqnarray}

The 6-point moment is 
\begin{IEEEeqnarray}{rCl}
\mu_{123456}
&=&
\sum\limits_{l_{1-6}m_{1-6}}\mu_{123456}^{123456}(a)\phi_{m_1N_0+k_1}^{l_1}\phi_{m_2N_0+k_2}^{l_2}\phi_{m_3N_0+k_3}^{l_3}
\nn\\
&&\qquad\qquad\times\:
\phi_{m_4N_0+k_4}^{*l_4}\phi_{m_5N_0+k_5}^{*l_5}\phi_{m_6N_0+k_6}^{*l_6}\nn\\
&=&
\mu_{14}\mu_{25}\mu_{36}+
\mu_{14}\mu_{26}\mu_{35}+
\mu_{15}\mu_{24}\mu_{36}\nn\\
&&+\:
\mu_{15}\mu_{26}\mu_{34}+
\mu_{16}\mu_{24}\mu_{35}+
\mu_{16}\mu_{25}\mu_{34}\nn\\
&&+\: \kappa_{123456},
\label{eq:mu123456}
\end{IEEEeqnarray}
where  $\mu_{ij}$ is given in \eqref{eq:correlations}, and we used 
\begin{IEEEeqnarray*}{rCl}
  \mu_{123456}^{123456}(a)&\eqdef&\E a_{m_1}^{l_1}a_{m_2}^{l_2}a_{m_3}^{l_3}
a_{m_4}^{*l_4}a_{m_5}^{*l_5}a_{m_6}^{*l_6}\\
&=&
\const{P}^3\Bigl(\delta_{14}^{14}\delta_{25}^{25}\delta_{36}^{36} +
\delta_{14}^{14}\delta_{26}^{26}\delta_{35}^{35}+
\delta_{15}^{15}\delta_{24}^{24}\delta_{36}^{36}\\
&&+\:
\delta_{15}^{15}\delta_{26}^{26}\delta_{34}^{34}+
\delta_{16}^{16}\delta_{24}^{24}\delta_{35}^{35}+
\delta_{16}^{16}\delta_{25}^{25}\delta_{34}^{34}
\Bigr)\\
&&+\:\kappa_{123456}^{123456}(a),
\end{IEEEeqnarray*}
where $\delta_{m_1m_2}^{l_1l_2}\eqdef\delta_{m_1m_2}\delta_{l_1l_2}$. 

For \iid\ symbols, from
\eqref{eq:moment-to-cumulant-4} and \eqref{eq:moment-to-cumulant-6}, we have
$\kappa_{12}^{12}(a)=\tilde{S}_{12}(a) \delta_{12}^{12}$,
$\kappa_{1234}^{1234}(a)=\tilde{S}_{1234}(a)\delta_{12}^{12}\delta_{23}^{23}\delta_{34}^{34}$,
$\kappa_{123456}^{123456}(a)=\tilde{S}_{123456}(a)
\delta_{12}^{12}\delta_{23}^{23}\delta_{34}^{34}\delta_{45}^{45}\delta_{56}^{56}$,
and so on, with cumulant densities
\begin{IEEEeqnarray}{rCl}
  \tilde{S}_{12}(a) &=&\E|a|^2,\nn\\
 \tilde{S}_{1234}(a)&=&\E|a|^4-2\E^2|a|^2 ,\label{eq:4-cumulant}\\
 \tilde{S}_{123456}(a)&=&\E|a|^6-9\E|a|^2\E|a|^4+12\E^3|a|^2\label{eq:6-cumulant}.
 \end{IEEEeqnarray}

The simplifications of Section~\ref{sec:kinetic} in the case of
uncorrelated Gaussian signals,
due to integration
over delta functions, do not occur anymore. If $\phi^l(t)=p(t-lT/M)$,
where $p(t)$ is a pulse shape in time interval $[0,T/M]$, then $\phi_k^l=p_k\exp(j2\pi kl/M)$ and
\begin{IEEEeqnarray*}{rCl}
  \mu_{12}&=&\const{P}(2N+1)p_1p_2^*\sum\limits_{l=1}^M\exp(j2\pi
  l(k_1-k_2)/M)\\
&=&\const{P}(2N+1)Mp_1p_2^*\delta_{12}.
\end{IEEEeqnarray*}
In this case there is no correlation and PSDs are modified only via $\kappa_{123456}$.

\begin{remark}
The accuracy of the GN model has been improved in the enhanced GN model (EGN) 
\cite{dar2013pnp, serena2015,poggiolini2015simple,carena2014egn}. In the EGN model, correction terms are 
introduced to the GN model to account for non-Gaussianity. The forth-order 
correction term in \cite{dar2013pnp} is identified with the cumulant \eqref{eq:4-cumulant} in the KZ model. Likewise, 
the correction
terms $\Phi_a$ and $\Psi_a$ in \cite[Eq. 6]{carena2014egn} are, respectively, identified with 
cumulants \eqref{eq:4-cumulant} and \eqref{eq:6-cumulant}.
Furthermore, KZ 
model illustrates how infinitely many such terms can be added methodically. 

\label{rem:correction-terms}
\end{remark}

\subsection{Phase Interference}
The power spectral density of the nonlinear term in the NLS equation does not suggest that
one should consider nonlinearity as additive noise. In fact, the PSD obviously does
not capture cross-phase interference. In the energy-preserving NLS equation, XPM is a
constant phase shift, as shown, \eg, in \eqref{eq:dnls}.  However, in WDM, from
\eqref{eq:qk}, the signal energy is
\begin{IEEEeqnarray}{rCl}
  \sum\limits_{k=1}^{(2N+1)N_0}|q_k|^2=\sum\limits_{lm}|a_m^l|^2\phi_{mN_0+k}^l\phi_{mN_0+k}^{*l}.
\label{eq:energy-sum}
\end{IEEEeqnarray}
Typically, the per-user power is a known constant,
however, in an optical mesh network, the power of interfering users
may not be known. Energy is also not preserved in the presence of
loss. In such cases where XPM is no longer a constant phase shift,
part of the sum \eqref{eq:energy-sum} where $m\neq 0$ acts as cross-phase interference
for the center user.
 
The interference resulting from XPM is discussed in \cite{mecozzi2012nsl}. This is done by
substituting the WDM  input signal \eqref{eq:wdm-signal} into the
approximate solution \eqref{eq:q-pert}, sorting out interference
terms, and naming XPM and FWM. 


\section{KZ and GN PSDs in Multi-Span Systems}
\label{sec:multi-span}
The PSDs \eqref{eq:ppsd} and \eqref{eq:wtpsd} hold for one span of
lossless fiber with second-order dispersion. In this section, we
include loss and higher order dispersion, and generalize
\eqref{eq:ppsd} and \eqref{eq:wtpsd}
to multi-span links with amplification.  

We consider a multi-span optical system with $N$ spans, each of length
$\epsilon$, in a fiber of total 
length $z$, $z=N\epsilon$. 
Pulse
propagation in the overall link is governed
by 
\begin{IEEEeqnarray}{rCl}
  \partial_z
  q_\omega(z)&=&j\left(\frac{j\alpha(z)}{2}-\beta(\omega)\right)q_\omega(z)-j\gamma\mathcal{N}_\omega(q,q,q)(z)\nn\\
&&+\:\left(\sum\limits_{n=1}^N \frac{G_n (z)}{2}\delta(z-n\epsilon)\right)q_\omega(z)
,
\label{eq:nls-optics}
\IEEEeqnarraynumspace
\end{IEEEeqnarray}
where $\alpha(z)$ is (power) loss exponent, $G_n(z)\eqdef\int_{(n-1)\epsilon}^{n\epsilon} \alpha (l)\der l$
is the lumped gain exponent at the end of span $n$, $\gamma$ is the nonlinearity
coefficient and
\begin{IEEEeqnarray*}{rCl}
  \beta(\omega)=\beta_0+\beta_1(\omega-\omega_0)+\frac{\beta_2}{2}(\omega-\omega_0)^2+\cdots,
\end{IEEEeqnarray*}
is the dispersion function (also known as the wavenumber or propagation
constant). 

Lumped power amplification at the end of each span restores the linear part
of PSD, however, since loss is distributed, it does not normalize the
nonlinear part. As a result, signal amplification leads to a growth of FWM
interference, which we calculate in this section.

\begin{remark}
Loss and periodic amplification have been discussed in \cite{churkin2015} in the context of fiber
lasers. Here a modified kinetic 
equation approach is taken to describe 
laser spectrum.  Changes in spectrum (kinetics) in \cite{churkin2015} occur due to loss and 
periodic amplification, \ie, perturbations to integrability. In the transmission problem, on the other hand, there is kinetics 
even with no loss and amplification in integrable model; see \eqref{eq:wtpsd}, as well as numerical simulations 
of the actual PSD in the literature of the GN model, and \cite{suret2011wave}. In our problem the hypothesis of delta concentration $\delta(\Omega_{123\omega})$
of  the standard turbulence \cite{zakharov1992turb} does not hold with desired 
accuracy. For parameters where 
the models of \cite{churkin2015} and this paper 
coincide, the observations are in agreement. In this section, we generalize \eqref{eq:wtpsd}.
\end{remark}

\subsubsection{GN Model}
Consider the NLS equation \eqref{eq:nls-optics} with loss, dispersion
$\beta(\omega)$, and amplification. Let
\begin{IEEEeqnarray*}{rCl}
 F(z)\eqdef\int_0^z\left(\alpha(l)- \sum\limits_{n=1}^N G_n
   (l)\delta(l-n\epsilon)\right)\der l.
\end{IEEEeqnarray*}
Comparing \eqref{eq:nls-optics} with the 
dimensionless NLS equation, we identify $\omega^2z\rightarrow jF(z)/2
-\beta(\omega)z$. Therefore
\begin{IEEEeqnarray*}{rCl}
  \Omega_{123\omega}z=\left(\omega_1^2+\omega_2^2-\omega_3^{*2}-\omega^2\right)z\\
\end{IEEEeqnarray*} 
in signal \eqref{eq:q-pert} and PSD \eqref{eq:wtpsd} is replaced with 
\[
\bar\Omega_{123\omega}z=jF(z)-\Omega_{123\omega}z,
\]
where now
\begin{IEEEeqnarray*}{rCl}
\Omega_{123\omega}\eqdef\beta(\omega_1)+\beta(\omega_2)-\beta(\omega_3)-\beta(\omega).
\end{IEEEeqnarray*} 
The GN PSD \eqref{eq:ppsd} is modified to
\begin{IEEEeqnarray*}{rCl}
  S_\omega(z)=e^{-F(z)}\left(S_\omega(0)+2\gamma^2\int 
|\tilde H_{123\omega}|^2 S_1S_2S_3\delta_{123\omega}\der\omega_{123}\right),
\end{IEEEeqnarray*}
where 
\begin{IEEEeqnarray}{rCl}
  \tilde{H}_{123\omega}&\eqdef&-j\int\limits_0^z
  e^{j\bar\Omega_{123\omega}l}\der l\nn\\
&=&-j\int\limits_0^z
  e^{-(F(l)+j\Omega_{123\omega}l)}\der l.
\IEEEeqnarraynumspace
\label{eq:gn-htilde}
\end{IEEEeqnarray}

Several cases can be derived from \eqref{eq:gn-htilde}. 

\paragraph{Single-span lossy fiber}
In a single-span fiber with constant loss $\alpha$ and no amplification, $G=0$ and $F(z)=\alpha z$. Thus $\tilde
H_{123\omega}=H_{123\omega}(j\alpha
-\Omega_{123\omega})(z)$. This shows the effect of loss and higher order dispersion.

\paragraph{Multi-span links}
In a multi-span link with constant loss $\alpha$, 
\begin{IEEEeqnarray}{rCl}
 F(l)=\alpha
l-\alpha\epsilon\sum_{i=1}^NU(l-i\epsilon)=\alpha(l-n\epsilon),\:
n=\lfloor l/\epsilon\rfloor, 
\IEEEeqnarraynumspace
\label{eq:Fl}
\end{IEEEeqnarray}
where $U(x)$ is the
Heaviside step function. Thus
\begin{IEEEeqnarray}{rCl}
  \tilde{H}_{123\omega}&=&-j\int\limits_0^z
  e^{\left(-\alpha l+\alpha\epsilon\sum\limits_{n=1}^N
      U(l-n\epsilon)-j\Omega_{123\omega}l\right)}\der l\nn\\
&=&-j\sum\limits_{n=0}^{N-1}\int\limits_{n\epsilon^-}^{(n+1)\epsilon^-}
e^{\left(-\alpha (l-n\epsilon)-j\Omega_{123\omega}l\right)}\der l\nn\\
&=&-j \sum\limits_{n=0}^{N-1}e^{-jn\epsilon\Omega_{123\omega}}\int\limits_{0}^{\epsilon}
e^{j\left(j\alpha -\Omega_{123\omega}\right)z'}\der z'\nn\\
&=&H(j\alpha-\Omega_{123\omega})(\epsilon)G_{123\omega}^{\text{GN}},
\label{eq:ms-gn-psd}
\end{IEEEeqnarray}
 where
\begin{IEEEeqnarray}{rCl}
  G_{123\omega}&\eqdef&\sum\limits_{n=0}^{N-1}e^{-jn\epsilon\Omega_{123\omega}}=\frac{1-e^{-jz
      \Omega_{123\omega}}}{1-e^{-j\epsilon\Omega_{123\omega}}}\nn\\
&=&e^{-j\frac{\Omega}{2}(z-\epsilon)}
\frac{\sin(z\Omega_{123\omega}/2)}{\sin(\epsilon\Omega_{123\omega}/2)}.
\label{eq:G123omega}
\end{IEEEeqnarray}

Therefore,  the GN PSD of multi-span link is given by the same
equation \eqref{eq:ppsd}, with $H(\Omega_{123\omega})(z)$ replaced with
$(\gamma/2)H(j\alpha-  \Omega_{123\omega})(\epsilon)G_{123\omega}$. Note that
$\tilde H$ describes the FWM growth in both the signal and PSD. For
further clarification, see Appendix~\ref{app:gn-multi-span}.

\subsubsection{KZ Model}
Considering the analysis of Section~\ref{sec:kinetic}, factors $-(\der
F(z)/\der z) S_k$ and 
$-2(\der F(z)/\der z) S_{lmnk}$ appear, respectively, in
the right hand sides of moment equations \eqref{eq:dS-dz}  and \eqref{eq:dSlmnk-dz}. The KZ PSD is
\begin{IEEEeqnarray}{rCl}
  S_{k}(z)&=&e^{-F(z)}\Bigl(S_k^0
+\:2\gamma\sum\limits_{\nr_k}\int\limits_{0}^ze^{F(z')}\Im(S_{lmnk}(z'))\der z'\Bigr),
\IEEEeqnarraynumspace
\label{eq:dSk-dz-2}
\end{IEEEeqnarray} 
where
\begin{IEEEeqnarray}{rCl}
  \Im(S_{lmnk}(z))&=&2\gamma\int\limits_0^z
e^{-2 (F(z)-F(z'))}\cos\left(\Omega_{lmnk}(z-z')\right)
\nn\\
&&\qquad\times\:T_{lmnk}(z')\der z'.
\IEEEeqnarraynumspace
\label{eq:dSlmnk-dz-2}
\end{IEEEeqnarray}
Here, as in \eqref{eq:dSlmnk-dz-integ} and \eqref{eq:kinetic}, 
we assumed that $S_{lmnk}(0)$ is
real-valued for quasi-Gaussian input. 

We substitute \eqref{eq:dSlmnk-dz-2} into \eqref{eq:dSk-dz-2} and
solve the resulting fixed-point equation iteratively starting from $S_k(z)=0$. The first
iterate gives $S_k^{(0)}(z)=\exp(- F(z))S_k^0$. In the next iterate, the collision term is found
to be $T_{lmnk}(z)=\exp(-3 F(z))T_{lmnk}(0)$, which is no longer
constant. Using this collision term in  \eqref{eq:dSlmnk-dz-2}, and
subsequently in \eqref{eq:dSk-dz-2}, we obtain 
\begin{IEEEeqnarray}{rCl}
  S_k^{\text{KZ}}
=e^{-F(z)}\left(S_k^0+2\gamma^2\sum\limits_{lmn\:\in\: \nr_k}|\tilde H_{lmnk}|^2 T_{lmnk}\right),
\label{eq:kzpsd-multi-span}
\end{IEEEeqnarray}
where
\begin{IEEEeqnarray}{rCl}
  |\tilde H_{lmnk}|^2&=&2\int\limits_0^z\int\limits_{0}^{z'} e^{-(F(z')+F(l))}
  \cos\left(\Omega(z'-l)\right)\der l\der
  z'.
\IEEEeqnarraynumspace
\label{eq:Htilde-kz}
\end{IEEEeqnarray}
The integration in \eqref{eq:Htilde-kz} is over a triangle. However
the function under integration is symmetric in $l$ and $z'$, \ie,
around the line $l=z'$. Thus integration can be extended to the rectangle:
 \begin{IEEEeqnarray}{rCl}
  |\tilde H_{lmnk}|^2&=&\int\limits_0^z\int\limits_{0}^{z} e^{-(F(z')+F(l))}
  \cos\left(\Omega(z'-l)\right)\der l\der
  z'\nn\\
&=&
\left|
\int\limits_0^z e^{-(F(l)+j\Omega_{lmnk} l)}\der l\right|^2.
\end{IEEEeqnarray}

This is the same as $|\tilde H_{123\omega}|^2$ in \eqref{eq:gn-htilde}
for the GN model, with $123\omega\rightarrow lmnk$. 

It follows that in all cases, the
PSD kernels $|H_{123\omega}|^2$ and $|\tilde{H}_{123\omega}|^2$ in the GN
and KZ models are the same.

\paragraph{Single-span lossy fiber}
\noindent
For constant $\alpha$, the PSD is given by \eqref{eq:kzpsd-multi-span}
with $F(z)=\alpha z$ and
\begin{IEEEeqnarray*}{rCl}
|\tilde H_{lmnk}|^2&=&|H(j\alpha-\Omega_{lmnk})(z)|^2\\
&=&2e^{-\alpha z}\Bigl(
\frac{\cosh(\alpha z)-\cos(\Omega z)}{\alpha^2+\Omega^2}
\Bigr).
\end{IEEEeqnarray*}

This shows that in the presence of loss and physical parameters, just
as in the GN model, the
KZ PSD, after amplification $\exp(\alpha z)$ at the end of the link, is the same as \eqref{eq:wtpsd}, with $H(\Omega_{lmnk})(z)$ replaced
with $(\gamma/2)H(j\alpha-\Omega_{lmnk})(z)$.

\paragraph{Multi-span links}

In the multi-span link, at the end of the link $F(z)=0$. 
The PSD is given by \eqref{eq:kzpsd-multi-span} with $F(z)=0$ and
\begin{IEEEeqnarray*}{rCl}
\tilde{H}_{123\omega}=H(j\alpha-\omega_{123\omega})(\epsilon) G_{123\omega}.
\end{IEEEeqnarray*}
with the same $G_{123\omega}$ given by \eqref{eq:G123omega}.


\section{Conclusions}

A mathematical framework based on the WWT theory is presented to study
the evolution of multi-point cumulants in nonlinear dispersive
partial differential equations with random input data. This framework
is used to explain how energy is distributed among Fourier
modes in the nonlinear Schr\"odinger equation, by
considering interactions among four Fourier modes and studying
the role of the resonant, non-resonant, and trivial quartets
in the dynamics. As an application, a PSD, termed KZ model, is proposed for 
calculating the interference power in WDM systems.

The GN model, often used in optical communication, suggests a spectrum evolution, in agreement
with numerical and experimental fiber-optic
transmissions. That seemingly conflicts with the WWT which predicts a stationary
spectrum for integrable models. It is shown that if the kinetic equation of WWT is solved to the next order in
nonlinearity, a PSD is obtained which is similar to the GN PSD.
The two models are explained mathematically and connected with each
other. The analysis shows that the kinetic
equation of the NLS equation better describes the PSD. This is not
surprising, because the GN model applies perturbation theory to the signal
equation, while the WWT applies it directly to the
PSD differential equation. The assumptions of the WWT are verified in
data communications and the basic KZ PSD is extended to various cases
encountered in communications. The GN model is also simplified for
clarity and comparison.


\section*{Acknowledgments}
The author thanks Frank Kschischang and Gerhard Kramer 
for many comments and in-depth discussions. The ideas were gradually crystallized in discussion 
with them.  Their contributions substantially 
improved this research.


\appendices


\section{Moments and Cumulants}
\label{app:cumulants}
\subsection{A Property of the Stationary Processes}
\label{app:cumulants-wss}
We make use of the following simple lemma throughout the
paper, which says that the spectral moments of a stationary process 
are supported on the stationary manifold \eqref{eq:stationary-manifold}.

\begin{lemma}
If  $q(t)$ is a strongly stationary stochastic process with finite power and existing Fourier transform, then 
$\mu_{1\cdots 2n}=S_{1\cdots 2n}\delta_{1\cdots 2n}$.
\label{lemma:correlation}
\end{lemma}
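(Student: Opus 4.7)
The plan is to prove the lemma by expressing $\mu_{1\cdots 2n}$ as a $(2n)$-fold Fourier integral of $R$, then using strong stationarity to peel off one time integration as a delta constraint.

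First, I would substitute the Fourier transform definition \eqref{eq:ft} into the spectral moment \eqref{eq:n-point-moment}. By Fubini (justified by the finite-power hypothesis and the assumed existence of Fourier transforms, interpreted distributionally if need be), the expectation can be moved inside the integrals, yielding
\begin{IEEEeqnarray*}{rCl}
\mu_{1\cdots 2n} = \int \cdots \int R(t_1,\ldots,t_{2n})\, e^{j(s_1\omega_1 t_1 + \cdots + s_{2n}\omega_{2n} t_{2n})}\, \mathrm{d}t_1\cdots\mathrm{d}t_{2n}.
\end{IEEEeqnarray*}
This rewrites the spectral moment as the $2n$-dimensional Fourier transform of the $2n$-point correlation function, evaluated at the signed frequencies $(s_1\omega_1,\ldots,s_{2n}\omega_{2n})$.

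Next, I would invoke strong stationarity \eqref{eq:wss} with reference point $t_0 = t_1$ to write $R(t_1,\ldots,t_{2n}) = R(0, t_2-t_1, \ldots, t_{2n}-t_1)$. Then I would change variables to $\tau_1 \eqdef t_1$ and $\tau_i \eqdef t_i - t_1$ for $i \geq 2$ (Jacobian equal to one). The exponent splits cleanly: the coefficient of $\tau_1$ is the stationarity sum $s_1\omega_1 + \cdots + s_{2n}\omega_{2n}$, while the remaining factor depends only on $(\tau_2,\ldots,\tau_{2n})$ weighted by $(s_2\omega_2,\ldots,s_{2n}\omega_{2n})$.

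Performing the $\tau_1$ integration produces (a scalar multiple of) the Dirac delta $\delta(s_1\omega_1 + \cdots + s_{2n}\omega_{2n}) = \delta_{1\cdots 2n}$, and the remaining $(2n-1)$-fold integral is precisely the Fourier transform of $R(0,\tau_2,\ldots,\tau_{2n})$ at $(s_2\omega_2,\ldots,s_{2n}\omega_{2n})$, which is $S_{1\cdots 2n}$ by the definition given in the text. Assembling these gives $\mu_{1\cdots 2n} = S_{1\cdots 2n}\delta_{1\cdots 2n}$.

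The main technical subtlety is regularity rather than the algebra: $R$ need not be integrable in every variable, so the Fourier transforms must be read in the tempered-distribution sense. The finite-power hypothesis and the stated existence of the Fourier transforms are what guarantee that the interchange of expectation and integration, and the emergence of a Dirac delta from the $\tau_1$ integral, are legitimate. Apart from this, the proof is just a change of variables, so I do not anticipate any real obstacle.
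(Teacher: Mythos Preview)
Your proposal is correct and follows essentially the same route as the paper: write $\mu_{1\cdots 2n}$ as the $2n$-fold Fourier integral of $R$, use strong stationarity to shift all time arguments by $t_1$, change variables to $\tau_i=t_i-t_1$, and integrate out $\tau_1$ to produce $\delta_{1\cdots 2n}$, identifying the remaining $(2n-1)$-fold transform as $S_{1\cdots 2n}$. Your added remarks about Fubini and the distributional reading of the Fourier transform are welcome technical hygiene that the paper leaves implicit.
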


\begin{proof}
Define
\begin{IEEEeqnarray*}{rCl}
E(t_1,\cdots, t_{2n})\eqdef\exp\bigl(j(s_1\omega_1t_1+\cdots+s_{2n}\omega_{2n}t_{2n})\bigr).
\end{IEEEeqnarray*}
Shifting $t_i$  by $t_1$, it can be verified that
\begin{IEEEeqnarray}{rCl}
E(t_1,\cdots, t_{2n})=\exp(j\Delta\omega t_1) E(0, t_2-t_1,\cdots, t_{2n}-t_1),
\nn\\
\label{eq:shift-property}
\IEEEeqnarraynumspace
\end{IEEEeqnarray}
where $\Delta\omega\eqdef\sum_{i=1}^{2n}s_i\omega_i$. Expressing $q_i$ in \eqref{eq:n-point-psd} using the inverse Fourier transform, the
(strong) stationarity property implies that 
\begin{IEEEeqnarray*}{rCl}
 \mu_{1\cdots  2n} &=&\E \Bigl(q_1\cdots q^*_{2n}\Bigr)\nn\\ 
&=&\int R(t_1,\cdots, t_{2n})E(t_1,\cdots, t_{2n})\der t_{1-2n}
\nn\\
&=&\int R(0,t_2-t_1,\cdots, t_{2n}-t_{1})\\
&&\times E(0,t_2-t_1,\cdots, t_{2n}-t_1)\exp(j\Delta\omega t_1)\der t_{1-2n}\\
&=&\delta(\Delta\omega)
\int R(0,\tau_2,\cdots, \tau_{2n})E(0,\tau_2,\cdots, \tau_{2n})\der \tau_{2-2n}\\
&=&S_{1\cdots 2n}\delta(\Delta\omega),
\end{IEEEeqnarray*}
where $\tau_i\eqdef t_i-t_1$ and
\begin{IEEEeqnarray*}{rCl}
S_{1\cdots 2n}\eqdef\mathcal{F}(R(0,\tau_2,\cdots, \tau_{2n}))(0, s_2\omega_2,\cdots, s_{2n}\omega_{2n}).
\end{IEEEeqnarray*}

\end{proof}

The lemma essentially follows from the shift property of the exponential function 
\eqref{eq:shift-property}.  A more general statement is the Wiener-Khinchin theorem.

As a corollary, if $q(t)$ is stationary, $\mu_{\omega\omega}$ is infinity.

\subsection{Cumulants}

\label{app:cumulants-moments}
Let $q=(q_1,\cdots,q_{n})$ be a complex-valued random vector. We
define the joint moment 
generating function of $q$, $\Phi:\Complex^n\mapsto\Reals$, as
\begin{IEEEeqnarray}{rCl}
  \Phi(\zeta)&\eqdef&
\E\exp\left(\Re(\zeta^H q)\right)
\label{eq:Phi}
\\
&=&
\sum\limits_{r,s=0}^\infty \frac{1}{r!s!}\mu^r_{s}\zeta^r\zeta^{*s},
\label{eq:Phi2}
\end{IEEEeqnarray}
where $r,s$ are $n$-dimensional multi-indices, $r!=\prod_{k=1}^{n}r_k!$, $r_k=0,1,\cdots$, $\zeta^r=\prod_{k=1}^n\zeta_k^{r_k}$, and
\begin{IEEEeqnarray*}{rCl}
\mu^r_s=\E q^{*r}q^{s},
\end{IEEEeqnarray*}
is the $|r+s|$-point moment. It can be verified that
\begin{IEEEeqnarray}{rCl}
  \mu^r_s=\frac{\partial^{r+s}}{\partial
    \zeta^{r}\zeta^{*s}}\Phi(\zeta)\Bigr|_{\zeta=0}.
\label{eq:mu-rs}
\end{IEEEeqnarray}

The joint cumulant generating function is
$\Psi(\tau)=\log\Phi(\tau)$. The cumulants $\kappa_s^r$ are defined
from multi-variate Taylor expansion similar to \eqref{eq:Phi2} and
\eqref{eq:mu-rs}.

Note that with the notation of Section \ref{sec:notation}, $\mu_{1\cdots
  2n}=\mu_{r}^{s}$ with $|r|=|s|=n$. Cumulants $\kappa$ relate to cumulant densities $\tilde S$ via
\begin{IEEEeqnarray*}{rCl}
  \kappa_{1\cdots 2n}=\tilde S_{1\cdots 2n}\delta_{1\cdots 2n}.
\end{IEEEeqnarray*}

Joint moments can be obtained from joint cumulants by applying chain
rule of differentiation to $\Phi=\exp\Psi$, obtaining 
\begin{IEEEeqnarray}{rCl}
  \mu_{1\cdots 2n}=\sum\limits_{p\in P}\prod\limits_{a\in p}\kappa_{a},
\label{eq:cumulants-to-moments}
\end{IEEEeqnarray}
where $P$ is the set of all partitions of $(1,2,\cdots, 2n)$.  The expression is
considerably simplified for a zero-mean process, which is assumed
throughout this paper. Further simplifications occur by noting that
the asymmetric moments and cumulants  are zero. With these simplifications,
the first four relations are: 
\begin{IEEEeqnarray}{rCl}
  \mu_1 &=&\kappa_1=0,\nn\\
\mu_{12} &=& \kappa_{12},\nn\\
 \mu_{1234} &=& \kappa_{13}\kappa_{24}+ \kappa_{14}\kappa_{23}+\kappa_{1234},\nn\\
\mu_{123456} &=&
\Bigl(
\kappa_{14}\kappa_{25}\kappa_{36}+\kappa_{14}\kappa_{26}\kappa_{35}+
\kappa_{15}\kappa_{24}\kappa_{36}\nn\\
&&+\:\kappa_{15}\kappa_{26}\kappa_{34}+
\kappa_{16}\kappa_{24}\kappa_{35}+\kappa_{16}\kappa_{24}\kappa_{35}\Bigr)\nn\\
&&+\:\Bigl(\kappa_{14}\kappa_{2356}+\kappa_{15}\kappa_{2346}+\kappa_{16}\kappa_{2345}\nn\\
&&+\:\kappa_{24}\kappa_{1234}+\kappa_{25}\kappa_{1356}+\kappa_{26}\kappa_{1345}\nn\\
&&+\:\kappa_{34}\kappa_{1256}+\kappa_{35}\kappa_{1246}+\kappa_{36}\kappa_{1245}
\Bigr)\nn\\
&&+\:\kappa_{123456}.
 \label{eq:cumulant-to-moment-6}
\end{IEEEeqnarray}
For a stationary process $\kappa_{ij}=\mu_{ij}=S_{ii}\delta_{ij}$ and we obtain
\eqref{eq:4-point}. Since higher-order cumulants are
smaller than the second-order cumulant for quasi-Gaussian distributions, we can assume 
$\kappa_{lmnk}=\kappa_{lmnl'm'n'}=0$, thereby obtaining \eqref{eq:6-point}.

Cumulants can be obtained from moments by applying M\"obius inversion formula
to \eqref{eq:cumulants-to-moments}, obtaining
\begin{IEEEeqnarray}{rCl}
  \kappa_{1\cdots 2n}=\sum\limits_{p\in P}\prod_{a\in p} (-1)^{|p|-1} (|p|-1)!\mu_{a},
\label{eq:moments-to-cumulants}
\end{IEEEeqnarray}
where $|p|$ is the number of the sets in the partition $p$, \ie, the number of
products in $\mu_a$. Setting asymmetric
moments to zero, we have
\begin{IEEEeqnarray}{rCl}
\kappa_{12} &=& \mu_{12}\nn\\
 \kappa_{1234} &=& -\mu_{13}\mu_{24}- \mu_{14}\mu_{23}+\mu_{1234}\label{eq:moment-to-cumulant-4}\\
\kappa_{123456} &=&
2\Bigl(
\mu_{14}\mu_{25}\mu_{36}+\mu_{14}\mu_{26}\mu_{35}+
\mu_{15}\mu_{24}\mu_{36}\nn\\
&&+\:\mu_{15}\mu_{26}\mu_{34}+
\mu_{16}\mu_{24}\mu_{35}+\mu_{16}\mu_{25}\mu_{34}\Bigr)\nn\\
&&-\:\bigl(\mu_{14}\mu_{2356}+\mu_{15}\mu_{2346}+\mu_{16}\mu_{2345}\nn\\
&&+\:\mu_{24}\mu_{1234}+\mu_{25}\mu_{1356}+\mu_{26}\mu_{1345}\nn\\
&&+\:\mu_{34}\mu_{1256}+\mu_{35}\mu_{1246}+\mu_{36}\mu_{1245}
\Bigr)\nn\\
&&+\:\mu_{123456}.
\label{eq:moment-to-cumulant-6}
 \end{IEEEeqnarray}
For \iid\ zero-mean random variables, any variables matching in
$\mu_{1\cdots 2n}$ is canceled by terms prior to $\mu_{1\cdots 2n}$ in
\eqref{eq:moment-to-cumulant-6} and \eqref{eq:moment-to-cumulant-4},
except when all variables are equal. Thus
$\kappa_{1234}=\left(\E|q_k|^4-2\E^2|q_k|^2\right)\delta_{12}\delta_{23}\delta_{34}$ and so on.


\section{GN PSD in Multi-span Links}
\label{app:gn-multi-span}

In Section \ref{sec:multi-span}, the multi-span PSDs were obtained in a
unified manner by introducing function $F(z)$ and modifying kernels
$H_{123\omega}$. One consequence is that multi-span PSDs can (expectedly) be
obtained from single-span PSDs, regardless of whether the interference is added
coherently or not in the signal picture.  The multi-span GN PSD \eqref{eq:ms-gn-psd} is
known in the literature \cite{poggiolini2012gn}. It is presumably obtained in the manner
described below; however, it is often intuitively explained rather than 
fully derived. 

 At the end of the first span, after amplification, we have
\begin{IEEEeqnarray*}{rCl}
  q_{\omega}(\epsilon)&=&
e^{-j\epsilon\beta(\omega) }\Bigl\{
q_{\omega}(0)
-j\gamma N_{\omega}(q,q,q|H)(0,\epsilon)
\Bigr\},
\end{IEEEeqnarray*}
where 
\begin{IEEEeqnarray*}{rCl}
  N_{\omega}(q,q,q|H)(z,z')&\eqdef&\int
  H_{123\omega}(j\alpha-\Omega_{123\omega})(z'-z)\\
&&\times\:q_1(z)q_2(z)q_3^*(z)\der\omega_{123}.
\end{IEEEeqnarray*}
At the end of the second span
\begin{IEEEeqnarray*}{rCl}
  q_{\omega}(2\epsilon)&=&e^{-j\epsilon\beta(\omega)}\Bigl\{
q_{\omega}(\epsilon)-j\gamma N_{\omega}(q,q,q|H)(\epsilon,2\epsilon)
\Bigr\}\nn\\
&=&e^{-j2\epsilon\beta(\omega)} q_{\omega}(0)-j\gamma
e^{-j2\epsilon\beta(\omega)}N_{\omega}(q,q,q|H)(0,\epsilon)\\
&&-j\gamma e^{-j\epsilon\beta(\omega)} N_{\omega}(q,q,q|H)(\epsilon,2\epsilon).
\end{IEEEeqnarray*}
The last term contains $q_{\omega}(\epsilon)$, which itself is the sum
of a linear and a nonlinear term. In agreement with the first-order
approach of the GN model, all FWM terms are evaluated at the linear solution;
the contribution of the nonlinear term to $q_{\omega}(2\epsilon)$ is
of second order $\gamma^2$. We thus evaluate the last term at $\exp(-j\epsilon\beta(\omega))q_{\omega}(0)$:
\begin{IEEEeqnarray*}{rCl}
 && \mathcal{N}_{\omega}(q,q,q|H)(\epsilon,2\epsilon)=\int
  H_{123\omega}(\epsilon)
q_1(\epsilon)q_2(\epsilon)q_3^*(\epsilon)
\delta_{123\omega}\der\omega_{123}\\
&=&e^{-j\epsilon\beta(\omega)}
\int
e^{-j\Omega \epsilon}
  H_{123\omega}(\epsilon)
q_1(0)q_2(0)q_3^*(0)
\delta_{123\omega}\der\omega_{123}\\
&=&e^{-j\epsilon\beta(\omega)} N_{\omega}(q,q,q|e^{-j\Omega\epsilon}H)(0,\epsilon).
\end{IEEEeqnarray*}
Thus
\begin{IEEEeqnarray*}{rCl}
  q_{\omega}(2\epsilon)&=&e^{-2j\epsilon\beta(\omega)}\Bigl\{
q_{\omega}(\epsilon)-j\gamma  N_{\omega}\left(q,q,q|G_1H\right)(0,\epsilon)\Bigr\},
\end{IEEEeqnarray*}
where $G_1\eqdef 1+e^{-j\Omega\epsilon}$. By induction, we have
\begin{IEEEeqnarray}{rCl}
   q_{\omega}(z)&=&e^{-jz\beta(\omega)}\Bigl\{
q_{\omega}(0)-j\gamma
N_{\omega}\left(q,q,q|GH\right)(0,\epsilon)\Bigr\},
\IEEEeqnarraynumspace
\label{eq:q-pert-multispan}
\end{IEEEeqnarray}
where $G$ is given by \eqref{eq:G123omega}. Squaring and averaging 
\eqref{eq:q-pert-multispan}, we obtain the
multi-span GN PSD.


\bibliographystyle{IEEEtran}


\end{document}